\def\AdvCite{True} 
\renewcommand*{\multicitedelim}{\addcomma\space}
\newcommand{\myhref}[1]{%
	\iffieldundef{doi}
	{\iffieldundef{url}
		{#1}
		{\href{\strfield{url}}{#1}}}
	{\href{http://dx.doi.org/\strfield{doi}}{#1}}%
}
	\newlength{\temp@x}%
	\newlength{\temp@y}%
	\newlength{\temp@w}%
	\newlength{\temp@h}%
	\def\my@coords#1#2#3#4{%
		\setlength{\temp@x}{#1}%
		\setlength{\temp@y}{#2}%
		\setlength{\temp@w}{#3}%
		\setlength{\temp@h}{#4}%
		\adjustlengths{}%
		\my@pdfliteral{\strip@pt\temp@x\space\strip@pt\temp@y\space\strip@pt\temp@w\space\strip@pt\temp@h\space re}}%
	\def\my@pdfliteral#1{\pdfliteral page{#1}}
	\def\adjustlengths{}%
	\def\my@pdfliteral #1{}
	\def\adjustlengths{\setlength{\temp@h}{-\temp@h}\addtolength{\temp@y}{1in}\addtolength{\temp@x}{-1in}}%
	\def\Hy@colorlink#1{%
		\begingroup
		\ifHy@ocgcolorlinks
		\def\Hy@ocgcolor{#1}%
		\my@pdfliteral{q}%
		\my@pdfliteral{7 Tr}
		\else
		\HyColor@UseColor#1%
		\fi
	}%
	\def\Hy@endcolorlink{%
		\ifHy@ocgcolorlinks%
		\my@pdfliteral{/OC/OCPrint BDC}%
		\my@coords{0pt}{0pt}{\pdfpagewidth}{\pdfpageheight}%
		\my@pdfliteral{F}
		%
		\my@pdfliteral{EMC/OC/OCView BDC}%
		\begingroup%
		\expandafter\HyColor@UseColor\Hy@ocgcolor%
		\my@coords{0pt}{0pt}{\pdfpagewidth}{\pdfpageheight}%
		\my@pdfliteral{F}
		\endgroup%
		\my@pdfliteral{EMC}%
		\my@pdfliteral{0 Tr}
		\my@pdfliteral{Q}%
		\fi
		\endgroup
	}%
\g@addto@macro\bfseries{\boldmath}
\g@addto@macro\mdseries{\unboldmath}
\g@addto@macro\normalfont{\unboldmath}
\g@addto@macro\rmfamily{\unboldmath}
\g@addto@macro\upshape{\unboldmath}
\renewcommand{\paragraph}[1]{\medskip\noindent{\bf #1}\xspace}
\colorlet{DarkRed}{red!50!black}
\colorlet{DarkGreen}{green!50!black}
\colorlet{DarkBlue}{blue!50!black}
\declaretheorem[numberwithin=section]{theorem}
\declaretheorem[numberlike=theorem]{lemma}
\declaretheorem[numberlike=theorem]{definition}
\crefname{algorithm}{Algorithm}{Algorithms}
\Crefname{algorithm}{Algorithm}{Algorithms}
\newcommand{\dist}{{\sf dist}}
\newcommand{\diam}{{D}\xspace}
\newcommand{\poly}{\operatorname{poly}}
\newcommand{\polylog}{\operatorname{polylog}}
\newcommand{\source}{\textbf{v}}
\newcommand{\congestion}{{\sf congestion}\xspace}
\newcommand{\dilation}{{\sf dilation}\xspace}
\def\selfnote#1{{\bf Self Note:} #1}
\def\danupon#1{\marginpar{$\leftarrow$\fbox{D}}\footnote{$\Rightarrow$~{\sf #1 --Danupon}}}
\def\thatchaphol#1{\marginpar{$\leftarrow$\fbox{T}}\footnote{$\Rightarrow$~{\sf #1 --Thatchaphol}}}
\def\selfnote#1{}
\def\danupon#1{}
\def\thatchaphol#1{}
\title{Distributed Exact Weighted All-Pairs Shortest Paths in $\tilde O(n^{5/4})$ Rounds} 
\author[1]{Chien-Chung Huang}
\author[2]{Danupon Nanongkai}
\author[2]{Thatchaphol Saranurak}
\affil[1]{CNRS, \'Ecole Normale Sup\'erieure, France}
\affil[2]{KTH Royal Institute of Technology, Sweden}
\date{}
\begin{document}

	\begin{titlepage}
		\maketitle
		\pagenumbering{roman}
\begin{abstract}
	
We study computing {\em all-pairs shortest paths} (APSP) on distributed networks (the CONGEST model). The goal is for every node in the (weighted) network to know the distance from every other node using communication.
The problem admits $(1+o(1))$-approximation $\tilde O(n)$-time algorithms ~\cite{LenzenP-podc15,Nanongkai-STOC14},
%
which are matched with $\tilde \Omega(n)$-time lower bounds~\cite{Nanongkai-STOC14,LenzenP_stoc13,FrischknechtHW12}\footnote{$\tilde \Theta$, $\tilde O$ and $\tilde \Omega$ hide polylogarithmic factors. Note that the lower bounds also hold even in the unweighted case and in the weighted case with polynomial approximation ratios.}.  No $\omega(n)$ lower bound or $o(m)$ upper bound were known for exact computation.



In this paper, we present an $\tilde O(n^{5/4})$-time randomized (Las Vegas) algorithm for exact weighted APSP; this provides the first improvement over the naive $O(m)$-time algorithm when the network is not so sparse.
Our result also holds for the case where edge weights are {\em asymmetric} (a.k.a. the directed case where communication is bidirectional). Our techniques also yield an $\tilde O(n^{3/4}k^{1/2}+n)$-time algorithm for the {\em $k$-source shortest paths} problem where we want every node to know distances from $k$ sources; this improves Elkin's recent bound~\cite{Elkin-STOC17} when $k=\tilde \omega(n^{1/4})$.  
	
We achieve the above results by developing distributed algorithms on top of the classic {\em scaling technique}, which we believe is used for the first time for distributed shortest paths computation. One new algorithm which might be of an independent interest is for the {\em reversed $r$-sink shortest paths} problem, where we want every of $r$ sinks to know its distances from all other nodes, given that every node already knows its distance to every sink. We show an $\tilde O(n\sqrt{r})$-time algorithm for this problem. 
Another new algorithm is called {\em short range extension}, where we show that in $\tilde O(n\sqrt{h})$ time the knowledge about distances can be ``extended'' for additional $h$ hops. For this, we use weight rounding to introduce small {\em additive} errors which can be later fixed.

\medskip{\em Remark:} Independently from our result, Elkin recently observed in \cite{Elkin-STOC17} that the same techniques from an earlier version of  the same paper (\url{https://arxiv.org/abs/1703.01939v1}) led to an $O(n^{5/3} \log^{2/3} n)$-time algorithm. 
\end{abstract}

		\newpage
		\setcounter{tocdepth}{2}
		\tableofcontents
	\end{titlepage}

	\newpage
	\pagenumbering{arabic}

	\section{Introduction} \label{sec:intro}

\paragraph{Distributed Graph Algorithms.}
Among fundamental questions in distributed computing is how fast a network can compute its own topological properties, such as minimum spanning tree, shortest paths, minimum cut and maximum flow.  This question has been extensively studied in the so-called {\em CONGEST model} \cite{Peleg00_book} (e.g. \cite{Elkin-STOC17,PanduranganRS-STOC17,HenzingerKN-STOC16,Nanongkai-STOC14,LenzenP_stoc13,DasSarmaHKKNPPW12,Elkin06,PelegR00,GarayKP98}).
In this model (see \Cref{sec:prelim} for details),  a network is modeled by a weighted $n$-node $m$-edge graph $G$. 
Each node represents a processor with unique ID and infinite computational power that initially only knows its adjacent edges and their weights. 
Nodes must communicate with each other in {\em rounds} to discover network properties, where in each round each node can send a message of size $O(\log n)$ to each neighbor. 
The {\em time complexity} is measured as the number of rounds needed to finish the task. It is usually expressed in terms of $n$, $m$, and $\diam$, where $\diam$ is the diameter of the network when edge weights are omitted. Throughout we use $\tilde \Theta$, $\tilde O$ and $\tilde \Omega$ to hide polylogarithmic factors in $n$.

%

%

Note that the whole network can be aggregated to a single node in $O(m)$ time. Thus any graph problem can be trivially solved within $O(m)$ time. A fundamental question is whether this bound can be beaten, and if so, what is the best possible time complexity for solving a particular graph problem. This question has been studied for several decades, marked by a celebrated $O(n\log n)$-time algorithm for the minimum spanning tree (MST) problem by Gallager et al.~\cite{GallagerHS83}. This result was gradually improved and settled with $\tilde \Theta(\sqrt{n} + \diam)$ upper and lower bounds~\cite{PelegR00,GarayKP98,KuttenP98,Awerbuch87,ChinT85,Gafni85}.\footnote{See also \cite{PanduranganRS-STOC17,Elkin-arxiv17-mst} for recent results.}

\paragraph{Approximation vs. Exact Algorithms.} 
Besides MST, almost no other problems were known to admit an $o(m)$-time distributed algorithm when we require the solution to be {\em exact}. 
More than a decade ago, a lot of attention has turned to {\em distributed approximation}, where we allow algorithms to return approximate solutions (e.g. \cite{Elkin04}).  This relaxation has led to a rapid progress in recent years. For example, SSSP, minimum cut, and maximum flow can be $(1+o(1))$-approximated in $\tilde O(\sqrt{n}+\diam)$ time~\cite{HenzingerKN-STOC16,BeckerKKL16,Nanongkai-STOC14,NanongkaiS14_disc,GhaffariK13,GhaffariKKLP15}\footnote{For the maximum flow algorithm, there is an extra $n^{o(1)}$ term in the time complexity.}, and all-pairs shortest paths can be $(1+o(1))$-approximated in $\tilde O(n)$ time~\cite{LenzenP-podc15,Nanongkai-STOC14}; moreover, these bounds are essentially tight up to polylogarithmic factors \cite{DasSarmaHKKNPPW12,Elkin06,PelegR00,KorKP13}. 
%
Given that approximating many graph properties are essentially solved, it is natural to turn back to exact algorithms. A fundamental question is:
\begin{quote}
 {\em Are approximation distributed algorithms more powerful than the exact ones?}
\end{quote}
So far, we only have an answer to the MST problem: due to the lower bound of Das~Sarma~et~al. \cite{DasSarmaHKKNPPW12} (building on \cite{Elkin06,PelegR00,KorKP13}), any $\poly(n)$-approximation algorithm requires $\tilde \Omega(\sqrt{n}+\diam)$ rounds; thus, approximation does {\em not} help. 
For most other problems, however, answering the above question still seems to be beyond current techniques: On the one hand, we are not aware of any lower bound technique that can distinguish distributed $(1+o(1))$-approximation from exact algorithms for the above problems. On the other hand, for most of these problems we do not even know any non-trivial (e.g. $o(m)$-time) exact algorithm. 
%
%
(One exception that we are aware of is SSSP where the classic Bellman-Ford algorithm~\cite{Bellman58,Ford56} takes $O(n)$ time. This bound was recently (in STOC'17) improved by Elkin \cite{Elkin-STOC17}.)



\paragraph{All-Pairs Shortest Paths (APSP).} Motivated by the above question, in this paper we attempt to reduce the gap between the upper and lower bounds for solving APSP exactly.  The goal of the APSP problem is for every node to know the distances from every other node.\footnote{This problem is sometimes referred to as {\em name-independent routing schemes}. See, e.g. \cite{LenzenP_stoc13,LenzenP-podc15}  for discussions and results on another variant called {\em name-dependent routing schemes} which is not considered in this paper.}
Besides being a fundamental problem on its own, this problem is a key component in, e.g., routing tables constructions~\cite{LenzenP_stoc13,LenzenP-podc15}. 

Nanongkai~\cite{Nanongkai-STOC14} and Lenzen and Patt-Shamir~\cite{LenzenP_stoc13,LenzenP-podc15} presented $(1+o(1))$-approximation $\tilde O(n)$-time algorithms, as well as an $\tilde \Omega(n)$ lower bound which holds even when $\diam=O(1)$, when the network is unweighted, and against randomized algorithms.\footnote{In fact, the lower bound holds even for $\poly(n)$-approximation algorithms when the network is weighted and for $(\polylog(n))$-approximation algorithms when the network is unweighted. The same lower bound also holds even for the easier problem of approximating the network diameter~\cite{FrischknechtHW12}. In particular, for the weighted case, the lower bound holds even for $\poly(n)$-approximation algorithms. For the unweighted case, the lower bound holds even for $(3/2-\epsilon)$-approximating the diameter, and even for sparse networks~\cite{AbboudCK16}.}  Very recently, Censor-Hillel~et~al. \cite{Censor-HillelKP17} improved the lower bound to $\Omega(n)$. The same lower bound obviously holds for the exact case.
Neither an $\omega(n)$ lower bound nor an $o(m)$-time algorithm was known, except for some special cases; a notable one is the unweighted case where there are $O(n)$-time algorithms \cite{LenzenP_podc13,HolzerW12,PelegRT12}. 





\paragraph{Results.} Our main result is an $\tilde O(n^{5/4})$-time exact APSP algorithm. Our algorithm is randomized Las Vegas: the output is always correct, and the time guarantee holds both in expectation and with high probability\footnote{We say that an event holds with high probability (w.h.p.) if it holds with probability at least $1-1/n^c$, where $c$ is an arbitrarily large constant.}.
This result provides the first improvement over the naive $O(m)$-time algorithm when the network is not so sparse, and significantly reducing the gap between upper and lower bounds.

Our algorithm also works with the same guarantees when edge weights are {\em asymmetric}, a.k.a. the {\em directed case}. In this case, an edge between nodes $u$ and $v$ can be viewed as two {\em directed} edges, one from $u$ to $v$ and another from $v$ to $u$. These two edges might have different weights, and the weight can be set to infinity. (Note, however, that the infinite weight is not really necessary as it can be replaced by a large $\poly(n)$ weight.) We emphasize that the underlying network is undirected so neither edge direction or weight affect the communication. 
While the previous $(1+o(1))$-approximation $\tilde O(n)$-time algorithms for APSP also work for this case \cite{Nanongkai-STOC14,LenzenP-podc15}, in general it is less understood than the undirected case. For example, while there are tight  $(1+o(1))$-approximation $\tilde O(\sqrt{n}+\diam)$-time algorithms for SSSP~\cite{HenzingerKN-STOC16,BeckerKKL16} in the undirected case, the best known algorithms for the directed case are the $(1+o(1))$-approximation $\tilde O(\sqrt{n\diam}+\diam)$-time one \cite{Nanongkai-STOC14} and the  $\tilde O(\sqrt{n}\diam^{1/4}+\diam)$-time algorithm for the special case called single-source reachability~\cite{GhaffariU15}.

Our techniques also yield an improved algorithm for the {\em $k$-source shortest paths} ($k$-SSP) problem. In this problem, some nodes are marked as source nodes initially (each node knows whether it is marked or not). The goal is for every node to know its distance from every source node. We let $k$ denote the number of source nodes. We show a randomized Las-Vegas $\tilde O(n^{3/4}k^{1/2}+n)$-time algorithm. (Observe that our APSP algorithm is simply a special case when $k=n$.) 
Prior to our work, approximation algorithms and the unweighted case were often considered for this problem (e.g. \cite{ElkinN-FOCS16,HolzerW12,KhanKMPT12,Elkin05-podc01}). 
The only non-trivial exact algorithm known earlier was the algorithm of Elkin~\cite{Elkin-STOC17}.
%
The performance of such algorithm compared to ours is as follows. (We ignore the polylogarithmic terms below for simplicity. For a more precise time guarantee, see \cite{Elkin-STOC17}.)
%
%
\danupon{CHECK: Below is new as Elkin changes his numbers in his final version. Please check.}
\begin{enumerate}[noitemsep]
	\item When $\diam=O(k\sqrt{n})$ and $k = O(\sqrt{n})$, Elkin's algorithm takes $\tilde O(n^{5/6}k^{2/3})$ time. In this case our algorithm is faster when $k=\tilde \omega(n^{1/4})$. 
	\item When $k = \Omega(\sqrt{n})$, Elkin's algorithm takes $\tilde O(n^{2/3}k)$ time. In this case our algorithm is faster when $k=\tilde \omega(n^{1/3})$. 
\end{enumerate}
To conclude, our $k$-SSP algorithm is faster whenever $k=\tilde \omega(n^{1/4})$.

\paragraph{Remarks.}

\noindent 1. The time guarantees of our algorithms depend on the number of bits needed to represent edge weights. This is typically $\polylog(n)$ since edge weights are usually assumed to be positive integers bounded from above by $\poly(n)$; see \Cref{sec:model}. This is the main drawback of the scaling technique that our algorithm heavily relies on (see below). The guarantees of other distributed algorithms we discussed, including Elkin's algorithm \cite{Elkin-STOC17}, do not have this dependency.

\smallskip\noindent 2. Throughout the paper we only show that the output is correct with high probability, but in $O(n)$ time we can check the correctness as follows. First, every node lets its neighbors know about its distances from other nodes (this takes $O(n)$ time). Then, every node checks if it can improve its distance from any node using the distance knowledge from neighbors. If the answer is ``no'' for every node, then the computed distance is correct. If some node answers ``yes'', it can broadcast its answer to all other nodes in $O(n)$ time.

\smallskip\noindent 3. Independently from our result, Elkin recently observed in \cite{Elkin-STOC17} that the same techniques from an earlier version of the same paper (\url{https://arxiv.org/abs/1703.01939v1}) led to an $O(n^{5/3} \log^{2/3} n)$-time algorithm  for APSP on undirected networks.\danupon{TO CONFIRM with Elkin that his algorithm doesn't work for directed case.} 
%
%
Also very recently, Censor-Hillel~et~al.~\cite{Censor-HillelKP17} showed that the standard Alice-Bob framework is incapable of providing a super-linear lower bound for exact weighted APSP, and raised the complexity of APSP as ``an intriguing open question''. They also showed an $\tilde \Omega(n^2)$ lower bound for exactly solving some NP-hard problems, such as minimum vertex cover, maximum independent set and graph coloring. This implies a huge separation between approximation and exact algorithms, since some of these problems can be solved approximately in $O(\polylog(n))$ time.



\subsection{Overview of Algorithms and Techniques.} Our algorithms are built on the {\em scaling technique}. This is a classic technique heavily studied in the sequential setting (e.g. \cite{Goldberg95-soda93,GabowT89,Gabow85-focs83}). As far as we know this is the first time it is used for shortest paths computation in the distributed setting. This technique (see \Cref{sec:scaling} for details) allows us to assume that the distance between any two nodes is $O(n)$ (i.e., the so-called {\em weighted diameter} is $O(n)$). 
%
%
The main challenge here is that edge weight can be {\em zero} (but cannot be negative); without the zero weight, there are already many $\tilde O(n)$-time exact algorithms available (e.g. \cite{LenzenP_podc13,Nanongkai-STOC14}). 
Our algorithms consist of two main subroutines developed for this case, which might be of independent interest. We discuss these subroutines below. To avoid the discussion being too complicated, readers may assume throughout the discussion that the input network has symmetric edge weight. Note however that in reality we have to deal with the asymmetric weights even if the original weight is symmetric. Additionally, we assume for simplicity that every pair of nodes has a unique shortest path.

%

%

\danupon{NEXT ROUND: Easier to say things if define $\dist(u, v)$?}
%

%
\medskip\noindent {\em 1. Short-range extension.} The first subroutine is called {\em short-range-extension}. For simplicity, let us first consider a special case called {\em short-range} problem. In this problem we are given a parameter $h$. The goal is for every node $v$ to know the distance from every node $u$ such that the shortest $uv$-path has at most $h$ edges. 
Previously, this task can be achieved in $\tilde O(nh)$ time by running the Bellman-Ford algorithm for $h$ rounds from every node. By exploiting special properties obtained from the scaling technique, we develop an $\tilde O(n\sqrt{h})$-time algorithm for this problem.  
The main idea is as follows. First we increase the zero weight to a small positive weight $\Delta=1/\sqrt{h}$. By a breadth-first-search (BFS) algorithm, we can solve APSP in the new network (with positive weights) in $\tilde O(n/\Delta)$ time. This solution gives an {\em upper bound} to the APSP problem on the original network (with zero weights). Since we are interested in only shortest paths with at most $h$ edges, it can be argued that the upper bound obtained has an {\em additive error} of $h\Delta$; i.e. it is only $h\Delta$ higher than the actual distance. We fix this additive error by running the Bellman-Ford algorithm for $h\Delta$ rounds from every node. 

The short-range algorithm above can be generalized to the following  {\em short-range-extension} problem. We are given an integer $h$, and initially some nodes in the network already know distances to some other nodes.  For any nodes $u$ and $v$, let $(u=x_0, x_1, x_2, \ldots, x_k=v)$ be the shortest $uv$-path.  We say that $(u, v)$ is {\em $h$-nearly realized} if at least one node among $x_k, x_{k-1}, \ldots, x_{k-h}$ knows its distance from $u$. (Note that the fact that $(u, v)$ is $h$-nearly realized does not necessarily imply that $(v, u)$ is also $h$-nearly realized.) At the end of our algorithm we want to make node $v$ know the distance from $u$, for every nodes $u$ and $v$ such that $(u, v)$ is $h$-nearly realized initially. Observe that the short-range problem is the special case where  initially node $u$ knows distances from no other nodes.
By modifying the short-range algorithm, we can show that this problem can be solved in $\tilde O(n\sqrt{h})$ time as well.

\medskip\noindent {\em 2. Reversed $r$-sink shortest paths.} 
The second subroutine is called {\em reversed $r$-sink shortest paths}. Initially, we assume that every node $v$ knows the distance from $v$ to $r$ sink nodes. The goal is for every sink to know its distance from every node. A naive solution is for every node $v$ to broadcast to the whole network the distance from $v$ to every sink. This takes $O(nr)$ time since there are $O(nr)$ distance information to broadcast. In this paper, we develop an $\tilde O(n\sqrt{r})$-time algorithm for this task. 

The main idea is for every node $v$ to route the distance from $v$ to every sink $t$ through the shortest $vt$-path. If there is a node $x$ that is contained in more than $n\sqrt{r}$  shortest paths (thus there will be too much information going through $x$),  we will call $x$ a {\em bottleneck node}. We can bound the number of bottleneck nodes to  $O(\sqrt{r})$ by a standard argument -- we charge each bottleneck node to $n\sqrt{r}$ distinct shortest paths among $nr$ of them. 
Now, for every shortest $vt$-path that does not contain a bottleneck node, we  route the distance from node $v$ and sink $t$ as originally planned. This takes $\tilde O(n\sqrt{r})$ time since there is $\tilde O(n\sqrt{r})$ bits of information going through each node. 
For shortest $vt$-paths that contain bottleneck nodes, we do the following. For every bottleneck node $c$, we make every node know their distances from and to $c$ by running the Bellman-Ford algorithm starting at $c$. Then every node broadcasts to the whole network its distance to and from every bottleneck node.\thatchaphol{TODO: the technical part did not do exactly this. But this is simpler. We should edit the technical part.} Since there are $\sqrt{r}$ bottleneck nodes, this takes $O(n\sqrt{r})$ time in total. It is not hard to show that every sink $t$ knows the distance from every node $v$ after this step. 
\danupon{SELF NOTE: It doesn't help to recurse with $k$-SSP here because we have to broadcast the distances between bottleneck nodes and every node anyway.}

\medskip\noindent {\em Putting things together.} 
Finally, we sketch how all tools are put together. First we run the short-range algorithm with parameter $h=\sqrt{n}$. Then we sample $\tilde O(\sqrt{n})$ nodes uniformly at random called {\em centers} so that every $h$-hop path contains a center with high probability. Each center $c$ broadcasts to the whole network its distances to some centers that it learns from the short-range algorithms. At this point, every node knows its distance to every center. We invoke the reversed $r$-sink shortest paths algorithm with centers as sink nodes 
(so $r = \tilde O(\sqrt{n})$),
so that every center knows its distance from every node. At this point, it is not hard to prove that every pair of nodes is $h$-nearly realized. So, we finish by invoking the short-range-extension algorithm with parameter $h=\sqrt{n}$. The total time is $\tilde O(n\sqrt{r}+n\sqrt{h})=\tilde O(n^{5/4})$. 

\danupon{DISCUSS: I just realized that we don't need the short-range algorithm at all. In the first step we just run Bellman-Ford from every center. This should make it more clear which steps contribute to the running time. I don't think this changes the running time of any of our algorithms though. But the nice thing about the short range algorithm is that once we understand it, it's easy to see why the short-range-extension algorithm works.}\thatchaphol{I do not follow this comment. I thought Bellman-Ford from $r$ many centers takes $O(nr)$ rounds.}

To extend the above idea to the $k$-source shortest paths problem, we need slight modifications here and there; in particular, (i) we modify the short-range extension and reversed $r$-sink shortest paths algorithms to deal with $k$ source nodes, and (ii) we treat the sampled centers as source nodes since we need to know the distances from and to them. 

%
%
%
%
%
%

%
%


	\section{Preliminaries} \label{sec:prelim}

\subsection{The Model}  \label{sec:model}

In a nutshell, we consider the standard CONGEST model, except that instead of an undirected graph the underlying graph is modeled by a {\em bidirected} graph, i.e. a directed graph in which the reverse of every edge is also an edge. This is because we have to deal with asymmetric edge weight (even when the initial network has symmetric weights).  Additionally, for simplicity we assume that nodes IDs are in the range of $\{0, 1, \ldots, n-1\}$. (This assumption can be achieved in $O(n)$ time.)

More precisely, we model a network by a {\em bidirected} unweighted  $n$-node $m$-edge graph $G$, where nodes model the processors and  edges model the {\em bounded-bandwidth} links between the processors. Let $V(G)$ and $E(G)$ denote the set of nodes and (directed) edges of $G$, respectively. 
The processors  (henceforth, nodes) are assumed to have unique IDs in the range of $\{0, 1, \ldots, n-1\}$ and infinite computational power. (Note again that typically nodes' IDs are assumed to be in the range of $\{1, \ldots, \poly(n)\}$. But in $O(n)$ time the range can be reduced to $\{0, 1, \ldots, n-1\}$.) 
Each node has limited topological knowledge; in particular, it only knows the IDs of its neighbors and knows {\em no} other topological information (e.g., whether its neighbors are linked by an edge or not). Nodes may also accept some additional inputs as specified by the problem at hand.

For the case of graph problems, the additional input is {\em edge weights}.  Let $w:E(G)\rightarrow \{1, 2, \ldots, \poly(n)\}$ be the edge weight assignment.\footnote{Note that it might be natural to include $\infty$ as a possible edge weight. But this is not necessary since it can be replaced by a large weight of value $\poly(n)$.} We refer to network $G$ with weight assignment $w$ as the {\em weighted network}, denoted by $G(w)$. The weight $w(u,v)$ of each edge $(u,v)$ is known only to $u$ and $v$. As commonly done in the literature, 
we will assume that the maximum weight is $\poly(n)$; so, each edge weight can be sent through an edge (link) in one round.
%
We refer to the weight function as {\em symmetric}, or sometimes {\em undirected}, if for every (directed) edge $(u,v)$, $w(u,v)=w(v,u)$. Otherwise, it is called {\em asymmetric}, or sometimes {\em directed}. We note again that the symmetric case is the typical case considered in the literature, but we have to deal with the asymmetric case in our algorithm.

We measure the performance of algorithms by its running time, defined as the worst-case number of {\em rounds} of distributed communication. 
At the beginning of each round, all nodes wake up simultaneously. Each node $u$ then sends an arbitrary message of $O(\log n)$ bits through each edge $(u,v)$, and the message will arrive at node $v$ at the end of the round. 
%
We assume that nodes always know the number of the current round. 
%
%
%
In this paper, the running time is analyzed in terms of the number of nodes ($n$). 
Since $n$ can be computed in $O(\diam)$ time, where $\diam$ is the diameter of $G$, we will assume that every node knows $n$.

%

\subsection{Problems and Notations}\label{sec:notations}

For every nodes $s$ and $t$ in a weighted network $G(w)$, let $\dist_w(s, t)$ be the distance from $s$ to $t$ in $G(w)$. Note that if $w$ is asymmetric then it might be the case that $\dist_w(s, t)\neq\dist_w(t,s)$. 
Let $P^*_{w}(s, t)$ be the shortest path from $s$ to $t$ in $G(w)$; if there are more than one such path, we let $P^*_{w}(s, t)$ be the one with the least number of edges (if there are still more than one, break tie arbitrarily).  We refer to $P^*_{w}(s, t)$ as {\em the} shortest $st$-path.

The goal of the {\em all-pairs shortest paths} (APSP) problem is for every node $t$ to know  $\dist_w(s, t)$ for every node $s$. In the case of {\em $k$-source shortest paths} ($k$-SSP) problem, there is a set $S$ of $k$ {\em source nodes} (every node knows whether it is in $S$ or not). The goal is for  every node $t$ to know  $\dist_w(s, t)$ for every source $s\in S$. When $k=1$, the problem is called {\em single-source shortest paths} (SSSP).

We say that an event holds {\em with high probability} (w.h.p.) if it holds with probability at least $1-1/n^c$, where $c$ is an arbitrarily large constant.


\subsection{Basic Distributed Algorithms}

%
%
%
%
%

\paragraph{The Bellman-Ford Algorithm.} We note the following algorithm for SSSP on network $G(w)$, known as {\em Bellman-Ford} \cite{Bellman58,Ford56}. Let $s$ be the source node. For any node $t$, let $d_w^t(s, t)$ denote the knowledge of $t$ about $\dist_w(s, t)$. Initially, $d_w^t(s, t)=\infty$ for every node $t$, except that $d_w^s(s, s)=0$. The algorithm proceeds as follows. 
	\begin{enumerate}[noitemsep,label=(\roman*)] 
	\item In round 0, every node $t$ sends $d_w^t(s, t)$ to all its neighbors.
	\item When a node $t$ receives the message about $d_w^x(s, x)$ from its neighbors $x$, it uses the new information to decrease the value of $d_w^t(s, t)$. 
	\item If $d_w^t(s, t)$ decreases, then node $t$ sends the new value of $d_w^t(s, t)$ to all its neighbors.
	\item Repeat (ii) and (iii) for $n$ rounds. 
\end{enumerate}

Clearly, the above algorithm takes $O(n)$ rounds. Moreover, it can be proved that when the algorithm terminates $d_w^t(s, t)=\dist_w(s, t)$; i.e. $t$ knows $\dist_w(s,t)$.

\paragraph{Scheduling of Distributed Algorithms.} Consider $k$ distributed algorithms $A_1, A_2  \dots , A_k$. Let \dilation be such that each algorithm $A_i$ finishes in \dilation rounds if it runs individually. Let \congestion be such that there are at most \congestion messages, each of size $O(\log n)$, sent through each edge (counted over all rounds), when we run all algorithms together. We note the following result of Ghaffari \cite{Ghaffari15-scheduling}:

\begin{theorem}[\cite{Ghaffari15-scheduling}]\label{thm:Ghaffari}
There is a distributed algorithm that can execute $A_1, A_2  \dots , A_k$ altogether in $O(\dilation+\congestion\cdot \log n)$ time.
\end{theorem}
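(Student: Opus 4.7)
The plan is to follow the classical Leighton–Maggs–Rao random-delay approach, lifted to the distributed setting. The key idea is that if each algorithm $A_i$ is independently shifted in time by a uniformly random delay $\delta_i \in \{0,1,\ldots,\congestion\}$, so that $A_i$ simulates its $j$-th round during global round $\delta_i + j$, then with high probability no single edge carries more than $O(\log n)$ messages in any single global round. The total number of (virtual) rounds needed is at most $\dilation + \congestion$, and each virtual round can then be simulated in $O(\log n)$ real rounds by serializing the $O(\log n)$ messages crossing each edge. This immediately gives the claimed bound $O(\dilation + \congestion \cdot \log n)$.

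To justify the congestion bound, fix an edge $e$ and a global round $t$. For each of the \congestion\ messages that algorithm $A_i$ sends across $e$ over its entire execution (summed across all $i$), the probability that it lands exactly in round $t$ after random shifting is $1/\congestion$. Hence the expected number of messages at $(e,t)$ is $1$. Independence of the delays across algorithms lets us apply a Chernoff bound; taking a union bound over the at most $m \cdot (\dilation + \congestion) \le \poly(n)$ edge-round pairs shows that no edge carries more than $O(\log n)$ messages in any round, w.h.p. First I would write this analysis carefully, checking that the relevant messages are independent across algorithms (which holds because each $A_i$'s message schedule is determined by $\delta_i$ alone once the algorithms are fixed).

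The hard part is making the delay assignment distributed: there is no central coordinator to sample $\delta_1,\ldots,\delta_k$ and broadcast them. The approach is to observe that the "bad event" for each $(e,t)$ pair (more than $c\log n$ messages crossing $e$ in round $t$) depends only on the delays $\delta_i$ of the $O(\congestion)$ algorithms using edge $e$, so the dependency graph among bad events has polynomially bounded degree. Thus the constraint system falls within the scope of a distributed Lovász Local Lemma solver: I would invoke a distributed LLL algorithm (such as Moser–Tardos–style resampling adapted to \congest) to find a good assignment of delays in $\polylog(n)$ rounds on the appropriate communication structure. After that, every algorithm $A_i$ knows its own $\delta_i$ locally, and execution of the shifted schedule only requires each node to queue incoming messages per edge and forward $O(\log n)$ of them per round, which fits the bandwidth budget. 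The main obstacle—and what Ghaffari's paper actually works to handle—is performing this LLL-based selection efficiently in \congest, including coordinating between algorithms that share edges; I would treat this as a black-box application of the distributed LLL machinery, verifying only that the dependency parameters are small enough for the LLL criterion to apply.
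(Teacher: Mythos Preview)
The paper does not prove this theorem at all; it is stated as a black-box citation of Ghaffari's scheduling result~\cite{Ghaffari15-scheduling} and used only as a tool. So there is nothing to compare your proposal against in this paper.

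That said, your sketch is a fair outline of how the cited result is actually proved: random delays in $\{0,\ldots,\congestion\}$, a Chernoff bound showing each edge-round pair sees $O(\log n)$ messages in expectation and with high probability, and then a distributed Lov\'asz Local Lemma to select the delays without global coordination. One imprecision worth tightening if you ever write this out: the messages are not individually independent, only the delays $\delta_i$ across distinct algorithms are. The correct indicator variables are ``algorithm $A_i$ places a message on edge $e$ in global round $t$,'' which are independent across $i$; each has probability at most $c_{i,e}/(\congestion+1)$ where $c_{i,e}$ is the number of rounds $A_i$ uses $e$, and the expectations sum to at most~$1$. With that fix the Chernoff/union-bound step goes through as you describe.
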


\paragraph{Broadcasting.} We need to follow fact following from basic upcasting and downcasting techniques \cite{Peleg00_book}. (The statement is from \cite{LenzenP_podc13}.)

\begin{lemma}\label{lem:broadcast}
Suppose each $v\in V$ holds $k_v \geq 0$ messages of $O(\log n)$ bits each, for a total of $K = \sum_{v\in V} k_v$ messages.
Then all nodes in the network can receive these $K$ messages within $O(K + \diam)$ rounds.
\end{lemma}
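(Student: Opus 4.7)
The plan is to reduce the lemma to two classical primitives: pipelined upcast and pipelined downcast along a BFS tree. First I would have the network build a BFS tree $T$ rooted at a canonical node (say, the node with the smallest ID, which can be agreed upon in $O(\diam)$ rounds by a flooding+min computation). This preprocessing takes $O(\diam)$ rounds and leaves every node knowing its parent in $T$, where $T$ has depth at most $\diam$.

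Next I would perform a \emph{pipelined convergecast} of all $K$ messages from their holders to the root. Each node maintains a queue of messages (its own $k_v$ initially, plus messages received from its children) and forwards at most one message per round to its parent according to a fixed canonical ordering of messages (e.g., by source ID, breaking ties by message contents). The standard analysis shows by induction on depth that the root receives all $K$ messages within $O(K+\diam)$ rounds: at depth $d$, the first message arrives by round $d$, and thereafter one new message arrives per round until the queue empties. This is exactly the upcast primitive of Peleg's book.

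Finally I would \emph{pipeline a downcast} of all $K$ messages from the root to every node along $T$. The root sends the $K$ messages to each of its tree-children in one canonical order, one per round; each internal node forwards along to its tree-children in the same order any message it has already received but not yet passed on. A symmetric pipelining argument shows that all messages reach every node within $O(K+\diam)$ rounds after the downcast begins. Concatenating the three phases yields the claimed $O(K+\diam)$ bound.

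The main obstacle, if one were to prove the lemma from scratch rather than appeal to the upcast/downcast primitives, is establishing the \emph{additive} $O(K+\diam)$ guarantee for each pipelined phase instead of a naive multiplicative $O(K\cdot\diam)$ bound. This requires choosing a message ordering that is globally consistent so that no node ever stalls waiting for a particular message to arrive out of order; the canonical-ordering scheme above (together with the observation that at any time a node's queue contains exactly those messages whose turn has come according to the ordering) is the standard way to avoid stalls. Since both primitives are well established, once the BFS tree is in place the lemma follows.
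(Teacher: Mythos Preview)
Your proposal is correct and matches the paper's approach exactly: the paper does not give its own proof but simply cites the lemma as following from the standard pipelined upcast/downcast primitives along a BFS tree (Peleg's book), which is precisely what you outline. There is nothing to add.
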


%

\subsection{Sampling the Centers}

In the beginning of each iteration, a special node (with ID 0) chooses a subset of \emph{centers} 
uniformly random and broadcasts this information (their IDs) to all other nodes. 
Here we use a lemma of Ullman and Yannakakis~\cite[Lemma~2.2]{UllmanY91}. 

\begin{lemma}[\cite{UllmanY91}]
	If we choose $z$ distinct nodes uniformly at random from an 
$n$-node graph, then the probability that a given (acyclic) path has a sequence of more than 
$(c n \log n)/z$ nodes, none of which is distinguished, is, for sufficiently large $n$, bounded above by 
$2^{-\alpha c}$ for some positive $\alpha$. 
\end{lemma}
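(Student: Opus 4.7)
The plan is to prove this by a direct probabilistic estimate combined with a union bound over starting positions on the path. First, I would fix the given acyclic path $P$; since it is acyclic, it has at most $n$ vertices. Let $h=\lceil (cn\log n)/z\rceil$. The event that $P$ contains a run of more than $h$ consecutive nodes none of which is distinguished implies that some ``window'' of $h$ consecutive nodes of $P$ is entirely disjoint from the sampled set $S$, and there are at most $n$ such windows (one per starting position along $P$). So it suffices to bound the probability for a single fixed window of $h$ nodes and then apply a union bound over the at most $n$ windows.

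Next, for a fixed set $H$ of $h$ nodes lying on $P$, the probability that a uniformly random sample $S$ of $z$ distinct nodes from the $n$-vertex set avoids $H$ is $\binom{n-h}{z}/\binom{n}{z}=\prod_{i=0}^{z-1}\frac{n-h-i}{n-i}$. A short cross-multiplication shows that each factor satisfies $\frac{n-h-i}{n-i}\le \frac{n-h}{n}$ for $i\ge 0$ (so sampling without replacement is at least as good as sampling with replacement for our purposes). Hence the miss probability is at most $(1-h/n)^{z}\le e^{-hz/n}$. Substituting the choice of $h\ge (cn\log n)/z$ gives $e^{-c\log n}=n^{-c/\ln 2}$, and the union bound over the $\le n$ windows yields an overall bound of $n^{1-c/\ln 2}$.

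Finally, I would observe that once $c$ is large enough that the exponent $1-c/\ln 2$ is negative, $n^{1-c/\ln 2}$ decays to $0$ as $n\to\infty$, and in particular one can choose a positive constant $\alpha$ (depending only on the base-change between $e$ and $2$) such that for all sufficiently large $n$ the bound $n^{1-c/\ln 2}\le 2^{-\alpha c}$ holds, giving the lemma as stated. I do not expect any real obstacle: the only step requiring care is the elementary inequality $\frac{n-h-i}{n-i}\le \frac{n-h}{n}$ that lets us replace the hypergeometric calculation by an easy independent-trials bound; the rest is a standard window-union-bound argument of the type used repeatedly in hitting-set analyses.
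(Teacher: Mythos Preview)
The paper does not give its own proof of this statement; it is quoted as Lemma~2.2 of Ullman and Yannakakis and used as a black box. Your argument is the standard hitting-set calculation and is correct: union-bound over the at most $n$ windows of $h$ consecutive vertices along the path, bound the hypergeometric miss probability for a single window by $(1-h/n)^{z}\le e^{-hz/n}$ via the elementary inequality $\frac{n-h-i}{n-i}\le\frac{n-h}{n}$ that you isolate, and substitute $h\ge (cn\log n)/z$.

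The only place to tighten is the last sentence, where the quantifiers are a little loose. The bound $n^{1-c/\ln 2}\le 2^{-\alpha c}$ is vacuous when $c\le\ln 2$; for larger $c$, fix any $\alpha\in(0,1/\ln 2)$ and check that the inequality then holds for all $n\ge 2$ once $c$ exceeds a constant depending only on $\alpha$ (since $\alpha c\ln 2/(c-\ln 2)\to\alpha\ln 2<1$ as $c\to\infty$). In applications the lemma is only ever invoked with $c$ a large constant so as to survive a union bound over polynomially many paths, so this restriction is harmless, but you should state the quantifier order explicitly when you write it up.
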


The special node chooses $\sqrt{n}\polylog(n)$ 
centers at random and broadcasts this information (the broadcasting can be done in 
$O(\sqrt{n}\polylog(n) + D) = O(n)$ rounds). Then the following lemma 
is a direct consequence of the previous one. 

\begin{lemma} 
\label{lem:centerDistribution}
Let $w$ be any non-negative weight function. For any nodes $s$ and $t$, let $P^*_{w}(s, t)$ be the shortest $st$-path in $G(w)$ as defined in \Cref{sec:notations}. Then, 
with high probability, every $P^*_{w}(s, t)$ can be decomposed into a set 
of subpaths 
$P_0=(s=u_0, \ldots, u_1)$,
$P_1=(u_1, \ldots, u_2)$, $\ldots$, $P_{k-1}=(u_{k-1}, \ldots, u_k=t)$, where
\begin{itemize}[noitemsep]
	\item the $u_i$ are centers for $1 \leq i \leq k-1$.  
         \item each subpath has at most $\sqrt{n}-1$ edges.
\end{itemize}
\end{lemma}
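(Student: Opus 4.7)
The plan is to apply the Ullman--Yannakakis lemma pointwise to each fixed shortest path and then take a union bound. First I would fix an arbitrary ordered pair $(s,t)$ and consider the path $P^*_w(s,t)$; by the tie-breaking rule in \Cref{sec:notations} (shortest distance, then fewest edges, then arbitrary), this path is a uniquely defined, simple (hence acyclic) path whose identity does \emph{not} depend on the random choice of centers. There are at most $n^2$ such ordered pairs, so in total there are at most $n^2$ distinct paths to control.

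Next I would invoke the Ullman--Yannakakis lemma on each such path. We sample $z=\Theta(\sqrt{n}\log^2 n)$ centers uniformly at random; the lemma then says that for each fixed path $P^*_w(s,t)$, the probability that it contains a run of more than $(c n \log n)/z$ consecutive non-center vertices is at most $2^{-\alpha c}$. Choosing the constant $c=\Theta(\log n)$ (absorbed into the $\polylog(n)$ factor in $z$) makes this probability at most $n^{-C}$ for any desired constant $C$, while keeping the total sample size at $\sqrt{n}\polylog(n)$ as stated. Moreover, the threshold $(cn\log n)/z$ can be arranged to be at most $\sqrt{n}-2$, with room to spare.

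Then I would take a union bound over all $n^2$ ordered pairs $(s,t)$. Since each individual failure probability is at most $n^{-C}$ and $C$ is arbitrary, choosing $C$ large enough (e.g.\ $C\ge c'+2$) guarantees that, with probability at least $1-n^{-c'}$, \emph{every} path $P^*_w(s,t)$ avoids any run of more than $\sqrt{n}-2$ consecutive non-center vertices. Conditioned on this event, I would produce the required decomposition explicitly: walk along $P^*_w(s,t)$ from $s$ to $t$ and record the sequence $u_1,\dots,u_{k-1}$ of centers encountered (in order); define the subpaths $P_i$ as the segments between consecutive $u_i$'s (with $u_0:=s$ and $u_k:=t$). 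Each subpath has at most $\sqrt{n}-2$ internal non-center vertices, hence at most $\sqrt{n}-1$ edges, which is exactly the bound claimed.

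The main obstacle I anticipate is purely bookkeeping rather than conceptual: calibrating $z$ so that the per-path threshold $(cn\log n)/z$ is at most $\sqrt{n}-1$ (in fact $\sqrt{n}-2$, to account for the two center endpoints contributing to the edge count of a segment) \emph{simultaneously} with a sufficiently small per-path failure probability to absorb the $n^2$ union bound. Both requirements force $\polylog(n)$ factors into $z$, but they are compatible, so the rest is straightforward. No additional distributed-algorithmic machinery is needed beyond the broadcasting of the chosen center IDs, which has already been discussed in the preceding text.
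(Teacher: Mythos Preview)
Your proposal is correct and takes essentially the same approach as the paper: the paper simply states that the lemma is ``a direct consequence'' of the Ullman--Yannakakis lemma after sampling $\sqrt{n}\,\polylog(n)$ centers, and your write-up is precisely the standard unpacking of that sentence (fix each of the at most $n^2$ paths, apply the lemma, union bound, then cut at the centers). There is nothing to add.
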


	\section{The Scaling Framework}\label{sec:scaling}

Let $\bar{w}$ denote the given (possibly asymmetric) weight function of the input graph $G$.
We want every node $t$ to know the distances from other nodes $s$ to itself with respect to $\bar{w}$.
We emphasize that every edge $(u,v)$ is directed, i.e., $(u,v)$ is an ordered pair. 
We need the following definitions:
\begin{definition}
	Let $\beta$ be the integer such that $2^{\beta-1}\le\max_{(u,v)\in E(G)} \bar{w}(u,v) <2^{\beta}$.
	For any $0\le i\le\beta$ and edge $(u,v)$, let $w_{i}(u,v)=\left\lfloor \bar{w}(u,v)/2^{\beta-i}\right\rfloor $.
	That is, $w_{i}(u,v)$ is the number represented by the
	first $i$ most significant bits of $\bar{w}(u,v)$ (when we treat the $\beta$-th
	bit as the most significant one). Let $b_{i}(u,v)\in\{0,1\}$ be the
	$i$-th bit in the binary representation of $\bar{w}(u,v)$, i.e., $\bar{w}(u,v)=\sum_{i=0}^{\beta-1}b_{i}(u,v)2^{i}$. 
	\label{def:beta}
\end{definition}
Note that $\beta=O(\log n)$ because the weights of edges in $G$
are polynomial. For any edge $(u,v)$, $w_{0}(u,v)=0$, $w_{\beta}(u,v)=\bar{w}(u,v)$,
and $w_{i+1}(u,v)=2w_{i}(u,v)+b_{\beta-i}(u,v)$ for $0<i<\beta$.
For each $i$, we can treat $w_{i}$ and $b_{i}$ as a weight function.
\begin{definition}
	For any (asymmetric) weight function $\hat{w}$, we denote by $d_{\hat{w}}^{u}(s,t)$
	the knowledge of the node $u$ about $\dist_{\hat{w}}(s,t)$, i.e., the
	distance from $s$ to $t$ with respective the weight $\hat{w}$.
\end{definition}
The algorithm will runs in $\beta$ iterations. At the $i$-th iteration,
we assume that for every node $t$ knows the distances from all other
nodes $s$ to itself with respect to the weight $w_{i-1}$, i.e. $d_{w_{i-1}}^{t}(s,t)=\dist_{w_{i-1}}(s,t)$
for all $s$ and $t$. The goal is to use this information to so that
at the end of the iteration the knowledge of the distances 
with respect to $w_{i}$, i.e. we have $d_{w_{i}}^{t}(s,t)=\dist_{w_{i}}(s,t)$
for all $s$ and $t$. Note that the assumption about the knowledge
holds in the very beginning when $i=1$, because $d_{w_{0}}^{t}(s,t)=\dist_{w_{0}}(s,t)=0$
for all $s$ and $t$ by \Cref{def:beta}.

For convenience, throughout the paper, we fix the iteration $i$.
We denote the weight functions $w:=w_{i}$, $w':=w_{i+1}$ and $b:=b_{\beta-i}$.
That is, we have $w'(u,v)=2w(u,v)+b(u,v)$ for every edge $(u,v)$.
In the beginning, we have $d_{w}^{t}(s,t)=\dist_{w}(s,t)$ and we
want to have $d_{w'}^{t}(s,t)=\dist_{w'}(s,t)$ at the end. 

\subsection{Upper Bounding the Distances}

As $\dist_{w'}(s,t)$ can be a large polynomial for some $s,t$, we can avoid this by working with 
a {\em set} of reduced weights $r_s$ defined as follows. 

\begin{definition} For any node $s$ and edge $e=(u,v)$,  
	let 
	\begin{equation}
	r_s(u,v) = 2 \dist_{w}(s,u)  + w'(u,v) - 2 \dist_{w}(s,v). 
	\end{equation}
	
	\label{def:rs_definition}
\end{definition}

We note that $r_w$ is an asymmetric weight function even if $w$ and $w'$ are symmetric.
The next lemma states some useful properties of $r_s$: 

\begin{lemma} Let $r_s$ be defined as in Definition~\ref{def:rs_definition}. Then the following holds. 
	\label{lem:basicFactRS}
	\begin{enumerate}
		
		\item[(i)] For any edge $e=(u,v)$, $r_s(u,v)\geq 0$. 
		\item[(ii)]  For any nodes $s$ and $t$, $\dist_{r_s}(s, t) \le n-1$.
		\item[(iii)]  For any nodes $s$ and $t$, $\dist_{w'}(s, t)=2\dist_{w}(s, t)+\dist_{r_s}(s, t)$. 
		In fact, any path is a shortest $st$-path in $G(w')$ if and only if it is a shortest $st$-path in $G(r_s)$. 
	\end{enumerate}
\end{lemma}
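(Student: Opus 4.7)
The plan is to extract everything from a single telescoping identity. For any $st$-path $P = (s = x_0, x_1, \ldots, x_k = t)$ in $G$, summing the definition of $r_s$ edge by edge yields
$$\sum_{j=1}^{k} r_s(x_{j-1}, x_j) = w'(P) + 2\dist_w(s, x_0) - 2\dist_w(s, x_k) = w'(P) - 2\dist_w(s, t),$$
because the $2\dist_w(s,\cdot)$ terms telescope and $\dist_w(s,s) = 0$. This one formula drives all three parts of the lemma, so once it is established the rest is essentially bookkeeping.

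For part (i), I would apply the identity to the one-edge path $(u,v)$, or equivalently unfold the definition directly. The triangle inequality gives $\dist_w(s,v) \leq \dist_w(s,u) + w(u,v)$, and since $w'(u,v) = 2w(u,v) + b(u,v) \geq 2w(u,v)$, substituting yields $r_s(u,v) \geq b(u,v) \geq 0$. Note this step uses only properties of $\dist_w$ and the $w'$-vs-$w$ relation from \Cref{def:beta}, nothing about $s$ or other nodes.

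For part (iii), the telescoping identity tells us that for every $st$-path $P$ the $r_s$-cost and the $w'$-cost differ by the constant $2\dist_w(s,t)$, which depends only on the endpoints. Minimizing over $P$ on both sides gives $\dist_{r_s}(s,t) = \dist_{w'}(s,t) - 2\dist_w(s,t)$, which rearranges to the claimed equation. Shifting every $st$-path cost by the same additive constant preserves the set of minimizers, so a path is shortest under $w'$ iff it is shortest under $r_s$.

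For part (ii), I would exhibit a particular path. Let $P^*$ be a \emph{simple} shortest $st$-path in $G(w)$, so that $|P^*| \leq n-1$ edges. Since $b(e) \in \{0,1\}$ for every edge, $w'(P^*) = 2w(P^*) + \sum_{e \in P^*} b(e) \leq 2\dist_w(s,t) + (n-1)$. Feeding this into the telescoping identity gives $r_s(P^*) \leq n-1$, hence $\dist_{r_s}(s,t) \leq n-1$. There is no real obstacle in this proof; the only mild point of care is to invoke a simple shortest $w$-path so that the hop count is bounded by $n-1$, after which all three parts follow from the identity without further computation.
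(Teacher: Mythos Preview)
Your proof is correct and follows essentially the same approach as the paper: both establish the telescoping identity $r_s(P)=w'(P)-2\dist_w(s,t)$ and derive all three parts from it, with (i) via the triangle inequality and $w'\ge 2w$, (ii) via a simple shortest $w$-path so that $\sum b(e)\le n-1$, and (iii) via the constant offset between $r_s$- and $w'$-costs. Your presentation of (iii) as ``shifting by a constant preserves minimizers'' is slightly more direct than the paper's two-inequality version, but the underlying argument is identical.
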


\begin{proof} For (i), observe that 
	$r_s(u,v) = 2\dist_{w}(s, u)+w'(u,v)-2\dist_{w}(s,v) \geq 
	2\dist_{w}(s, u)+2w(u,v)-2\dist_{w}(s,v) \geq 0$, where 
	the last inequality follows from the triangle inequality. 
	
	For (ii), first notice that 
	
	\begin{equation}
	r_s(P) = w'(P) - 2 \dist_{w}(s,t), \mbox{for any $st$-path $P$.}
	\label{equ:rsp}
	\end{equation}
	
	The above inequality follows easily from definition. 
	Let $P=(s=v_0, v_1,  \ldots, t=v_k)$, for some $k \le n-1$. Then,
	\begin{align*}
	r_{s}(P) & =\sum_{j=0}^{k-1}r_{s}(v_{j},v_{j+1})\\
	& =\sum_{j=0}^{k-1}2\dist_{w}(s,v_{j})+w'(v_{j},v_{j+1})-2\dist_{w}(s,v_{j+1})\\
	& =(\sum_{j=0}^{k-1}w'(v_{j},v_{j+1}))-2\dist_{w}(s,v_{k})\\
	& =w'(P)-2\dist_{w}(s,t)\,.
	\end{align*} 
	
	Now assume that $P$ is a shortest $st$-path in $G(w)$. Then 
	
	\begin{align*}
	\dist_{r_{s}}(s,t)\leq r_{s}(P) & \leq w'(P)-2\dist_{w}(s,t)\\
	& =(\sum_{j=0}^{k-1}2w(v_{j},v_{j+1})+b(v_{j},v_{j+1}))-2\dist_{w}(s,t)\\
	& =(\sum_{j=0}^{k-1}b(v_{j},v_{j+1}))+2w(P)-2\dist_{w}(s,t)\\
	& =\sum_{j=0}^{k-1}b(v_{j},v_{j+1})\leq n-1.
	\end{align*}

	Here the second inequality follows from~(\ref{equ:rsp}), 
	the fifth equality from the assumption
	that $P$ is a shortest path in $G(w)$ and the last inequality from the fact that 
	$k \leq n-1$ and $b(v_j,v_{j+1}) \in \{0,1\}$. This proves (ii).  
	
	Finally for (iii), let $P= (s=v_0, v_1, \ldots, t=v_k)$ be a shortest $st$-path in $G(w')$. 
	Then 
	
	$$\dist_{r_s}(s,t) \leq r_s(P) = w'(P) - 2 \dist_{w}(s,t)= \dist_{w'}(s,t) - 2 \dist_{w}(s,t),$$ 
	where the last equality holds as $P$ is a shortest $st$-path in $G(w')$. On the 
	other hand, let $P'= (s=v_0, v_1, \ldots, t=v_{k'})$ be a shortest path 
	in $G(r_s)$. Then 
	
	$$\dist_{w'}(s,t) \leq w'(P') = r_s(P') + 2\dist_{w}(s,t) = dist_{r_s}(s,t) + 2\dist_{w}(s,t),$$ 
	
	\noindent where the last equality holds because $P'$ is a shortest path in $G(r_s)$. The above 
	two inequalities establish the first part of (iii), while the second part follows 
	from the first part. 
\end{proof} 

Lemma~\ref{lem:basicFactRS}(ii) implies that shortest path tree with a source $s$, based on $r_s$, has depth at most $n$. However, we {\em cannot} construct 
such a tree using the standard BFS starting from $s$ in just $O(n)$ rounds, the difficulty 
being that it can happen that $r_{s}(u,v)=0$ for some edge $(u,v)$. We also note that Lemma~\ref{lem:basicFactRS}(ii)  does {\em not} imply that every edge in the shortest paths has 0/1-weight.

\section{Main Algorithm}\label{sec:main algo}

In this section, we show the main algorithm described in \Cref{alg:main} 
which is the algorithm for one iteration in the scaling framework from \Cref{sec:scaling}.
The setting is that there are three weight functions $w$, $w'$ and $b$ such that, for every edge $(u,v)$ of the input graph $G$, $b(u,v) \in \{0,1\}$ 
and 
\begin{align}
w'(u,v)&=2w(u,v)+b(u,v).  \label{eq:Input Main}
\end{align}
In the beginning, we have $d_{w}^{t}(s,t)=\dist_{w}(s,t)$
and we want that every node $t$ knows $\dist_{w'}(s,t)$ for every node $s$, i.e., $d_{w'}^{t}(s,t)=\dist_{w'}(s,t)$ 
at the end of the algorithm. 

For every pair of nodes $s$ and $t$, recall that $P^*_{w'}(s, t)$ is the shortest $st$-path in $G(w')$; 
if there are more than one shortest $st$-paths in $G(w')$, 
pick the one with the least number of edges (if there are still more than one, break tie arbitrarily).  
Let $C$ be the set of centers decided in Step 1 of \Cref{alg:main}. 
Next, we define an important definition for our algorithm. Let $P^*_{w'}(s,t)|C$ denote the subpath 
of $P^*_{w'}(s, t)$ from the last center in $C \cap P$ to $t$. If there is no center in $P$, let $P^*_{w'}(s,t)|C = P^*_{w'}(s, t)$.
Let $|P^*_{w'}(s,t)|$ be the number of edges in $P^*_{w'}(s,t)$; similarly, 
$|P^*_{w'}(s,t)|C|$ is the number of edges in $P^*_{w'}(s,t)|C$. 

Recall that by Lemma~\ref{lem:basicFactRS}(iii), $\dist_{w'}(s, t)$ 
differs from $\dist_{r_s}(s, t)$ by $2\dist_{w}(s, t)$, which is known 
to $t$. So if every node $t$ knows that the distances w.r.t. $r_s$ from each node $s$, i.e., $d^t_{r_s}(s,t)=\dist_{r_s}(s,t)$,
then each node $t$ can deduce the the distances w.r.t. to $w'$ as well, i.e., $d_{w'}^{t}(s,t)=\dist_{w'}(s,t)$ for all $s$.
\thatchaphol{This should be moved to short-range section}


\begin{algorithm}
	\caption{Main APSP Algorithm (for one iteration in the scaling framework)}\label{alg:main}
	\KwIn{A graph $G$ and the weight functions $w$, $w'$, and $b$ satisfying \Cref{eq:Input Main}. Every node $t$ knows $\dist_{w}(s,t)$ for every node $s$, i.e., $d_{w}^{t}(s,t)=\dist_{w}(s,t)$. Let $h=\sqrt{n}$.}
	\KwOut{Every node $t$ knows $\dist_{w'}(s,t)$ for every node $s$, i.e. $d_{w'}^{t}(s,t)=\dist_{w'}(s,t)$.}
	
	

	
	
	Node 0 randomly samples $\sqrt{n}\textrm{polylog}(n)$ centers (collectively denoted as $C$) and broadcast their 
	IDs to all other nodes. \tcp{This steps takes $O(n)$ rounds.}
	
	Node $t$ sends $\dist_{w}(s,t)$, for all nodes $s$, to its neighbors $x$ in $G$. The neighbor $x$ internally 
	 uses this knowledge to compute $r_{s}(x,t)$, for all nodes $s$, as defined in \Cref{def:rs_definition}. 
	 \tcp{This steps takes $O(n)$ rounds.}
	
	Apply the \textbf{short-range} algorithm (in \Cref{sec:short-range}) so that every node $s$ knows
	$d_{w'}^s(s,t) \geq \dist_{w'}(s,t)$ for all nodes $t$,
	and if $|P^*_{w'}(s,t)|\leq h$, $d_{w'}^s(s,t)= \dist_{w'}(s,t)$. 
	 \tcp{This step takes $\tilde{O}(n^{1.25})$ rounds.}	
	
	All centers $c \in C$ broadcast their knowledge of $d_{w'}^c(c,c')$, for all centers $c' \in C$,
	to all other nodes in the network. Every node $s$ internally uses this knowledge 
	to calculate $d_{w'}^s(s,c)= \dist_{w'}(s,c)$ for all centers $c \in C$.   \tcp{This step takes $\tilde{O}(n)$ rounds}	
	
	Apply the \textbf{reversed $r$-sink shortest paths} algorithm (in \Cref{sec:reverse})  with  nodes in $C$ as sinks so that every center $c \in C$ knows 
	$d_{w'}^c(s,c)= \dist_{w'}(s,c)$ for all nodes $s$. \tcp{This step takes $\tilde{O}(n^{1.25})$ rounds.}	
		
	Apply the \textbf{short-range-extension} algorithm (in \Cref{sec:short-range-extension}) so that every node $t$ knows  
	$d_{w'}^t(s,t) \geq \dist_{w'}(s,t)$ for all nodes $s$,
	and if $|P^*_{w'}(s,t)|C |\leq h$, $d_{w'}^t(s,t)= \dist_{w'}(s,t)$. \tcp{This step takes $\tilde{O}(n^{1.25})$ rounds.}	
							
\end{algorithm}

We first explain the high-level ideas behind our algorithm. In \Cref{alg:main},
Step 1 is for sampling the centers. Step 2 is needed for the execution
of Steps 3 and 6. Note that the implementation details of Steps 3,
5 and 6 will be elaborated in the subsequent sections.

\paragraph{Correctness:}
Let $h=\sqrt{n}$. For any nodes $s$ and $t$, we will argue that,
after executing Steps 3 to 6, every node $t$ knows the distance w.r.t.
$w'$ from $s$ to $t$, i.e., $d_{w'}^{t}(s,t)=\dist_{w'}(s,t)$.
Let $c_{s}$ be the first node in the path $P_{w'}^{*}(s,t)|C$, i.e.
$P^{*}_{w'}(c_{s},t)=P_{w'}^{*}(s,t)|C$. From the definition, if there
is no centers in $P_{w'}^{*}(s,t)$ then $c_{s}=s$ and otherwise
$c_{s}$ is the last center appeared in the path $P_{w'}^{*}(s,t)$
from $s$ to $t$.

We claim that after Step 5, the node $c_{s}$ will know the distance
w.r.t. $w'$ from $s$ to $c_{s}$, i.e., $d_{w'}^{c_{s}}(s,c_{s})=\dist_{w'}(s,c_{s})$.
If $c_{s}=s$, this is trivial. Suppose $c_{s}\neq s$. Consider the
shortest path $P_{w'}^{*}(s,c_{s})$ from $s$ to $c_{s}$. By \Cref{lem:centerDistribution},
we can partition $P_{w'}^{*}(s,c_{s})$ into subpaths, say $P_{0}=(u_{0}:=s,\ldots,u_{1})$,
$P_{1}=(u_{1},\ldots,u_{2})$, $\ldots$, $P_{k-1}=(u_{k-1},\ldots,u_{k}:=c_{s})$
so that each subpath $P_{j}$ has at most $h-1$ edges for $0\leq j\leq k-1$,
and the $u_{j}$'s are centers for $1\leq j\leq k-1$. As subpath
$P_{j}$ has at most $h-1$ edges, the \textbf{short-range} algorithm
guarantees in \Cref{lem:short range conclude} that $u_{j}$ knows $d_{w'}^{u_{j}}(u_{j},u_{j+1})=\dist_{w'}(u_{j},u_{j+1})$
for $0\leq j\leq k-1$ after Step 3 in \Cref{alg:main}.  
In Step 4,
$d_{w'}^{u_{j}}(u_{j},u_{j+1})$, for $1\leq j\leq k-1$, will broadcast
and be known to $s$. Therefore, after Step 4, the node $s$ would
be able to calculate $\dist_{w'}(s,c_{s})$ and so $d_{w'}^{s}(s,c_{s})=\dist_{w'}(s,c_{s})$.
Then, by the guarantee from \Cref{lem:reversed conclude} of the \textbf{reversed $r$-sink shortest paths} algorithm
in Step 5, the knowledge is ``exchanged'' and so $c_{s}$ knows
$\dist_{w'}(s,c_{s})$, i.e. $d_{w'}^{c_{s}}(s,c_{s})=\dist_{w'}(s,c_{s})$. 

By \Cref{lem:centerDistribution}, we also have that $P_{w'}^{*}(c_{s},t) = P_{w'}^{*}(s,t)|C$
has at most $h-1$ edges. As $d_{w'}^{c_{s}}(s,c_{s})=\dist_{w'}(s,c_{s})$,
by the guarantee of the \textbf{short-range-extension} algorithm by \Cref{lem:short range extension conclude}, 
we have after Step
6 the node $t$ knows the distance $\dist_{w'}(s,t)$, i.e. $d_{w'}^{t}(s,t)=\dist_{w'}(s,t)$
and we are done.

\paragraph{Running Time:}
There are $O(|C|)$ messages to be broadcasted in Step 1, and $O(|C|^{2})$
messages in Step 4. By \Cref{lem:broadcast}, this takes $O(|C|^{2}+\diam)=\tilde{O}(n)$
in total. Step 2 easily takes $O(n)$ rounds (by \Cref{thm:Ghaffari}
we have $\congestion=n$ and $\dilation=1$). In the following three
subsections, we will show that Steps 3, 5 and 6 take $\tilde{O}(n^{1.25})$
rounds each. In particular, \Cref{lem:short range conclude,lem:short range extension conclude} state that the short-range algorithm in Step 3 and the short-range-extension algorithm in Step 6 both take $\tilde{O}(n\sqrt{h})$. \Cref{lem:reversed conclude} states that the reversed $r$-sink shortest paths algorithm in Step 5 takes $\tilde{O}(n\sqrt{|C|})$.
In total, the running time in each iteration is $\tilde{O}(n^{1.25})$
rounds.

\begin{theorem} At the end of \Cref{alg:main}, 
	with high probability, for every node $t$, $d_{w'}^t(s,t)= \dist_{w'}(s,t)$ for all nodes $s$. 
	Furthermore, the algorithm takes $\tilde{O}(n^{1.25})$ rounds. 
\end{theorem}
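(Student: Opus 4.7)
The plan is to prove correctness and running time separately, following the structure sketched after \Cref{alg:main}. For correctness, I would fix a pair of nodes $s,t$ and trace how $t$ eventually learns $\dist_{w'}(s,t)$. The key object is the last center $c_s$ on the shortest path $P^*_{w'}(s,t)$ (with $c_s=s$ if no center lies on this path), so that $P^*_{w'}(c_s,t)=P^*_{w'}(s,t)|C$. The argument naturally splits into showing $(a)$ after Step~5, node $c_s$ knows $\dist_{w'}(s,c_s)$, and $(b)$ after Step~6, node $t$ knows $\dist_{w'}(s,t)$.

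For part $(a)$, I apply \Cref{lem:centerDistribution} to the prefix $P^*_{w'}(s,c_s)$: with high probability it decomposes into subpaths $P_0,\dots,P_{k-1}$ with endpoints $u_0=s, u_1,\dots,u_{k-1},u_k=c_s$ where $u_1,\dots,u_{k-1}$ are centers and each $P_j$ has at most $h-1=\sqrt{n}-1$ edges. Since each $P_j$ is itself a shortest path in $G(w')$ of hop-length at most $h$, \Cref{lem:short range conclude} guarantees that after Step~3 the endpoint $u_j$ knows $\dist_{w'}(u_j,u_{j+1})$. These $O(|C|^2)$ center-to-center distances are broadcast in Step~4, so $s$ can sum them and internally compute $d_{w'}^s(s,c_s)=\dist_{w'}(s,c_s)$. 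Since $c_s\in C$, applying \Cref{lem:reversed conclude} in Step~5 (with centers as sinks) transfers this knowledge to $c_s$, establishing $(a)$. A minor subtlety is that this requires $s$ to know $\dist_{w'}(s,c)$ for every center $c$, not just for $c_s$; this follows by the same argument applied with $c_s$ replaced by an arbitrary center, and it is exactly what \Cref{lem:reversed conclude} needs as input.

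For part $(b)$, observe that by construction the suffix $P^*_{w'}(s,t)|C = P^*_{w'}(c_s,t)$ contains no center in its interior, so by \Cref{lem:centerDistribution} it has at most $h-1$ edges with high probability. Since part $(a)$ guarantees that $c_s$ knows $\dist_{w'}(s,c_s)$, the pair $(s,t)$ is $h$-nearly realized (the witness is $c_s$, which lies within $h$ hops of $t$ on $P^*_{w'}(s,t)$). Therefore \Cref{lem:short range extension conclude} ensures that after Step~6 node $t$ knows $d_{w'}^t(s,t)=\dist_{w'}(s,t)$. A union bound over all $n^2$ pairs $(s,t)$ absorbs the high-probability failure of \Cref{lem:centerDistribution}.

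For the running time, I would simply add the contributions: Step~1 broadcasts $|C|=\tilde O(\sqrt n)$ IDs in $O(n)$ rounds by \Cref{lem:broadcast}; Step~2 sends $n$ distances along each edge, taking $O(n)$ rounds by \Cref{thm:Ghaffari} with $\dilation=1,\congestion=n$; Step~4 broadcasts $O(|C|^2)=\tilde O(n)$ values in $\tilde O(n)$ rounds by \Cref{lem:broadcast}; and Steps~3,~5,~6 run in $\tilde O(n\sqrt{h})=\tilde O(n\sqrt{|C|})=\tilde O(n^{5/4})$ rounds by \Cref{lem:short range conclude,lem:reversed conclude,lem:short range extension conclude}, using $h=\sqrt n$ and $|C|=\tilde O(\sqrt n)$. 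Summing yields $\tilde O(n^{5/4})$. The main obstacle in the proof is not any individual step but ensuring the ``handoff'' between the three subroutines is correctly set up — in particular, that by the time we invoke short-range-extension in Step~6, the required $h$-near-realization condition holds for every pair $(s,t)$ simultaneously (not just for a fixed pair); this is exactly where the decomposition in \Cref{lem:centerDistribution} must be applied uniformly with a union bound.
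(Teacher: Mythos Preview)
Your proposal is correct and follows essentially the same approach as the paper: you define the same pivot $c_s$ (the last center on $P^*_{w'}(s,t)$), establish via \Cref{lem:centerDistribution} and \Cref{lem:short range conclude} that Step~4 lets $s$ compute $\dist_{w'}(s,c)$ for every center $c$, use \Cref{lem:reversed conclude} to hand this knowledge to $c_s$, and then invoke \Cref{lem:short range extension conclude} on the short suffix $P^*_{w'}(s,t)|C$; the running-time accounting is likewise identical. The only addition you make beyond the paper is the explicit mention of the union bound over pairs and the remark that Step~5 requires $s$ to know $\dist_{w'}(s,c)$ for \emph{all} centers, both of which the paper leaves implicit.
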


	\subsection{Short-Range Algorithm} \label{sec:short-range}

In this section we show how to implement Step 3 of \Cref{alg:main} so that 
every node $s$ knows $d_{w'}^s(s,t) \geq \dist_{w'}(s,t)$ for all nodes $t$,
and if $|P^*_{w'}(s,t)|\leq h$, $d_{w'}^s(s,t)= \dist_{w'}(s,t)$. 

The main algorithm in this section is precisely described in \Cref{alg:short range}.
However, it yields a slightly different output: after finishing,
every node $t$ knows $d_{w'}^t(s,t) \geq \dist_{w'}(s,t)$ for all nodes $s$,
and if $|P^*_{w'}(s,t)|\leq h$, $d_{w'}^t(s,t)= \dist_{w'}(s,t)$. 
As they are completely symmetric, we can use \Cref{alg:short range} as an algorithm
for Step 3 of \Cref{alg:main} just by switching the direction of every edge in the graph.
The reason for presenting \Cref{alg:short range} that does not give exactly what we want for Step 3 of \Cref{alg:main} is that,
later in \Cref{sec:short-range-extension}, we will extend \Cref{alg:short range} and obtain the short-range-extension algorithm.
This formulation of \Cref{alg:short range} simplifies the modification a lot.
From now on, we will call \Cref{alg:short range} the short-range algorithm as well.

 
Recall that we mentioned earlier that some edges $(u,v)$ may 
have $r_s(u,v)=0$ and this poses difficulty. Our main idea is to deal with a strictly positive weight function $r'_s$, 
defined as $r_s$ rounded up to the next multiple of $\Delta=\sqrt{1/h}$. More precisely, 

\begin{definition} Let $\Delta =\sqrt{1/h} $. For every node $s$ and every edge $(u,v)$, let 

$r'_s(u,v) =\begin{cases}
	\Delta & \mbox{if $r_s(u,v)=0$, and}\\
	\Delta\lceil r_s(u,v)/\Delta\rceil & \mbox{otherwise.}\\
	\end{cases}$
\label{def:rsprime}
\end{definition} 
%
\thatchaphol{TODO: discuss that this is fine even if it is not integer.}

\begin{algorithm}
	\caption{Short-Range Algorithm}\label{alg:short range}
	\KwIn{Every node $t$ knows $\dist_{w}(s,t)$ and $r_s(t,x)$ for all nodes $s$ and all $t$'s neighbors $x$ in $G$.}
	\KwOut{For every pair of nodes $s$ and $t$, node $t$ knows $d_{w'}^t(s,t) \geq \dist_{w'}(s,t)$ and if
	 $|P^*_{w'}(s,t)|\leq h$, $d_{w'}^t(s,t) = \dist_{w'}(s,t)$.}
	
	For every edge $(u,v)$ and all nodes $s$, both $u$ and $v$ internally compute $r'_s(u,v)$ according to Definition~\ref{def:rsprime}. 
	
	For every node $t$, initially set $d_{r'_s}^t(s,t) = \infty$ for all nodes $s\neq t$ and $d_{r'_t}^t(t,t)=0$. 
	
	For every node $s$, compute SSSP tree from $s$ up to depth $n+h\Delta$ in terms of $r'_s$ by implementing 
	the following BFS: each node $t (\neq s)$ 
	updates $d_{r'_s}^t(s,t)$ according to the message $d_{r'_s}^x(s,x)$ it receives from its neighbor $x$. 
	If $d_{r'_s}^t(s,t) \leq n+ h\Delta$, then in round $d_{r'_s}^t(s,t)/\Delta$, the node $t$ sends $d_{r'_s}^t(s,t)$ to all its neighbors in $G$, if $t$ did not send any message in this step yet. 	
	\tcp{Note that we count the number of rounds from 0.} 
	\label{step:BFS-like}
	
	Every node $t$ sets $d_{r_s}^t(s,t)=\lfloor d_{r'_s}^t(s,t)\rfloor$ for all nodes $s$. (Note that $d_{r_t}^t(t,t)=0$.) Run the following algorithm (which is a modification of the Bellman-Ford algorithm) for every node $s$, in parallel:\; 
	\label{step:Bellman-Ford Modified}

        {
	\begin{enumerate}[noitemsep,label=(\roman*)] 
		\item In round 0, every node $t$ sends $d_{r_s}^t(s,t) $ to all its neighbors.\label{step:Bellman-Ford modified initial}
		\item When a node $t$ receives the message about $d_{r_s}^x(s,x)$ from its neighbors $x$, it uses the new information to decrease the value of $d_{r_s}^t(s,t)$ (as an upper estimate of $\dist_{r_s}(s, t)$). Note that  $d_{r_s}^t(s,t)$ is always an integer.
		\item If $d_{r_s}^t(s,t)$ decreases and $d_{r_s}^t(s,t)\geq d_{r'_s}^t(s,t)-h\Delta$, then the node $t$ sends the new value of $d_{r_s}^t(s,t)$ to all its neighbors. \label{step:Bellman-Ford limited communication step}
		\item Repeat (ii) and (iii) for $h$ rounds. 
	\end{enumerate}
	}

	Every node $t$ calculates $d_{w'}^t(s,t) =  2 \dist_{w}(s,t)+  d_{r_s}^t(s,t) $ for all nodes $s$. 
	

\end{algorithm}

\paragraph{Running Time:}
In \Cref{alg:short range}, Steps 1, 2 and 5 takes no time. For a
single source $s$, the BFS in Step 3 has $\dilation=O((n+h\Delta)/\Delta)=O(n/\Delta+h)$
rounds. As in BFS each node sends messages only once and we run the
BFS in parallel from all nodes $s$, we have $\congestion=O(n)$. By
\Cref{thm:Ghaffari}, we have that Step 3 takes $\tilde{O}(\dilation+\congestion)=\tilde{O}(n/\Delta+h+n)=\tilde{O}(n/\Delta)$.
Step 4 is essentially the Bellman-Ford algorithm except
the following modifications: 
\begin{enumerate}[noitemsep]
	\item we start with $d_{r_{s}}^{t}(s,t)=\lfloor d{}_{r'_{s}}^{t}(s,t)\rfloor$ instead
	of $d_{r_{s}}^{t}(s,t)=\infty$, and 
	\item a node $t$ sends its updated value of $d_{r_{s}}^{t}(s,t)$ only
	when $d_{r_{s}}^{t}(s,t)\geq d_{r'_{s}}^{t}(s,t)-h\Delta$ (instead
	of sending it every time $d_{r_{s}}^{t}(s,t)$ is decreased); see
	Step 4.\ref{step:Bellman-Ford limited communication step}. 
\end{enumerate}
We run the modified Bellman-Ford algorithm for every node $s$ in
parallel. This algorithm for a single source node $s$ has 
$\dilation=O(h)$ and $\congestion = O(h\Delta) = O(\sqrt{h})$ since
every node sends a message to its neighbors at most $O(h\Delta)$
times (due to the second modification). By \Cref{thm:Ghaffari}, parallelizing
$n$ such algorithms takes $\tilde{O}(h+n\cdot h\Delta)=\tilde{O}(nh\Delta)$
rounds. Now it can be concluded that \Cref{alg:short range} takes
$\tilde{O}(n/\Delta+nh\Delta)=\tilde{O}(n\sqrt{h})$ rounds. 

\paragraph{Correctness:}
Next, we show the correctness of  \Cref{alg:short range} using the following lemmas.


\begin{lemma} After Step 3 of Algorithm~\ref{alg:short range}, every node $t$ knows 
	$d_{r'_s}^t(s,t) \geq \dist_{r'_s}(s,t)$ for all nodes $s$ and in particular $d_{r'_s}^t(s,t) = \dist_{r'_s}(s,t)$ if $\dist_{r'_s}(s,t) \leq n + h \Delta$. 
	
	\label{lem:dprimeIsGood}
\end{lemma}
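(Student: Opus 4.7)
The plan is to prove the two halves of the statement separately, both by induction, and to exploit throughout the fact that by \Cref{def:rsprime} every weight $r'_s(u,v)$ is a strictly positive integer multiple of $\Delta$, so every finite value of $\dist_{r'_s}(s,\cdot)$ and of any estimate $d_{r'_s}^t(s,t)$ that the algorithm ever stores is also an integer multiple of $\Delta$.

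For the upper bound $d_{r'_s}^t(s,t) \ge \dist_{r'_s}(s,t)$, I would argue by induction on the rounds that the current value of $d_{r'_s}^t(s,t)$, if finite, is always the $r'_s$-weight of some actual $s$-to-$t$ walk. Initially only $d_{r'_s}^s(s,s)=0$ is finite, matching the empty walk. Whenever $t$ updates via a message from a neighbor $x$, it sets $d_{r'_s}^t(s,t)$ to at most $d_{r'_s}^x(s,x)+r'_s(x,t)$; by the inductive hypothesis this is the weight of an $s$-to-$x$ walk extended by the edge $(x,t)$, hence at least $\dist_{r'_s}(s,t)$.

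For the equality assertion, fix a source $s$ and write $\dist_{r'_s}(s,v)=D(v)\Delta$. The main claim, proved by induction on $D(v)$, is that whenever $D(v)\Delta \le n+h\Delta$, node $v$ sends out the message $D(v)\Delta$ in exactly round $D(v)$, and $d_{r'_s}^v(s,v)=D(v)\Delta$ from that round onward. The base case $v=s$ is immediate since $s$ sends $0$ in round $0$. For the inductive step, pick a shortest $s$-to-$v$ path $s=v_0,v_1,\ldots,v_k=v$ and let $D_j=D(v_j)$. Since $r'_s \ge \Delta$ on every edge, $D_0 < D_1 < \cdots < D_k = D(v)$, so in particular $D_{k-1} < D(v)$. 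By induction, $v_{k-1}$ sends $D_{k-1}\Delta$ in round $D_{k-1}$, and this message reaches $v$ by the end of that round, forcing $d_{r'_s}^v(s,v)\le D_{k-1}\Delta + r'_s(v_{k-1},v) = D(v)\Delta$. Combined with the first half of the lemma, the estimate equals $D(v)\Delta$ from the start of round $D_{k-1}+1$ onward. Moreover, $v$ cannot have sent in any earlier round $r<D(v)$, because the sending condition would have required $d_{r'_s}^v(s,v)=r\Delta<D(v)\Delta$, contradicting the lower bound $d_{r'_s}^v(s,v)\ge \dist_{r'_s}(s,v)$. Hence in round $D(v)$ the estimate is exactly $D(v)\Delta$ and $v$ sends it, closing the induction. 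Since the sending condition allows rounds up to $(n+h\Delta)/\Delta$, every $v$ with $\dist_{r'_s}(s,v)\le n+h\Delta$ is handled.

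The main obstacle I anticipate is simply the bookkeeping around the "send-once" rule: specifically, using the upper bound from the first half to rule out premature sends, so that the round in which $v$ broadcasts is forced to be $D(v)$ rather than something smaller. The parallelization over sources $s$ via \Cref{thm:Ghaffari} is unproblematic here, since that theorem preserves the logical ordering of events within each individual BFS, and the correctness argument above is per-source.
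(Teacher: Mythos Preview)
Your proposal is correct and follows essentially the same approach as the paper: the paper's proof is extremely terse (``The first part follows from the property of the BFS\ldots\ By a straightforward induction, it can be shown that by round $\dist_{r'_s}(s,t)/\Delta$, $d_{r'_s}^t(s,t) = \dist_{r'_s}(s,t)$''), and you have simply supplied the details of that induction on $D(v)=\dist_{r'_s}(s,v)/\Delta$. Your explicit treatment of the send-once rule---using the lower bound from the first half to rule out any premature send---is a detail the paper leaves implicit but which is indeed needed.
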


\begin{proof} The first part follows from the property of the BFS. For the second part, first notice that $\Delta$ divides $\dist_{r'_s}(s,t)$ 
	for all nodes $s$ and $t$. By a straightforward induction, it can be shown that by round $\frac{\dist_{r'_s}(s,t)}{\Delta}$, $d_{r'_s}^t(s,t) = \dist_{r'_s}(s,t)$, 
	if $ 0 \leq  \dist_{r'_s}(s,t) \leq n + h\Delta$. 
\end{proof}

\begin{lemma} After Step~\ref{step:Bellman-Ford Modified} of \Cref{alg:short range}, every node $t$ knows 
	$d_{r_s}^t(s,t) \geq \dist_{r_s}(s, t)$ for all nodes $s$, furthermore, if $|P^*_{w'}(s,t)|\leq h$, then 
	$d_{r_s}^t(s,t) = \dist_{r_s}(s, t)$; in particular, $d_{r_s}^t(s,t)$ is decreased to $\dist_{r_s}(s, t)$ in round $|P^*_{w'}(s,t)|$ or before. 
	\label{lem:short range correctness}
\end{lemma}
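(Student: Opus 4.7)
The plan is to prove two things: the upper-bound invariant $d_{r_s}^t(s,t)\ge \dist_{r_s}(s,t)$ is maintained throughout Step~\ref{step:Bellman-Ford Modified}, and exact equality holds at $t=v_k$ along the shortest path once $k:=|P^*_{w'}(s,t)|\le h$ rounds have elapsed. The first part is a short calculation; the second is a Bellman-Ford-style induction along the edges of $P^*_{w'}(s,t)$, which by \Cref{lem:basicFactRS}(iii) is also a shortest path in $G(r_s)$. The crucial subtlety is the sending restriction in step (iii), and verifying that it never blocks propagation along this path is where the algorithm's careful choice of parameters $\Delta$ and $h$ comes in.

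For the upper bound, observe that by \Cref{def:rsprime} we have $r'_s(u,v)\ge r_s(u,v)$ for every edge, hence $\dist_{r'_s}(s,t)\ge \dist_{r_s}(s,t)$. Combined with \Cref{lem:dprimeIsGood}, and the fact that $r_s$ is integer-valued (so $\dist_{r_s}(s,t)\in\mathbb{Z}$), the initial value $d_{r_s}^t(s,t)=\lfloor d_{r'_s}^t(s,t)\rfloor\ge \dist_{r_s}(s,t)$. The invariant is preserved by the relaxation in step (ii): if it holds for some neighbor $x$, then $d_{r_s}^x(s,x)+r_s(x,t)\ge \dist_{r_s}(s,x)+r_s(x,t)\ge \dist_{r_s}(s,t)$ by the triangle inequality for $r_s$.

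For the equality case, fix $P^*_{w'}(s,t)=(s=v_0,v_1,\dots,v_k=t)$ with $k\le h$ and note that every prefix is a shortest $sv_j$-path in $G(r_s)$. I would prove by induction on $j$ that by the end of round $j$ of Step~\ref{step:Bellman-Ford Modified}, $d_{r_s}^{v_j}(s,v_j)=\dist_{r_s}(s,v_j)$ and $v_j$ has already transmitted this value to $v_{j+1}$. The base case $j=0$ is immediate since $v_0=s$ has initial value $0$ and broadcasts it in round $0$. For the inductive step, once $v_{j-1}$ has transmitted $\dist_{r_s}(s,v_{j-1})$, node $v_j$ receives it within one round and relaxes to $d_{r_s}^{v_j}(s,v_j)\le \dist_{r_s}(s,v_{j-1})+r_s(v_{j-1},v_j)=\dist_{r_s}(s,v_j)$; combined with the upper-bound invariant this yields equality.

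The main obstacle is certifying that $v_j$ actually forwards its updated value, since step (iii) requires $d_{r_s}^{v_j}(s,v_j)\ge d_{r'_s}^{v_j}(s,v_j)-h\Delta$. To verify this, I use the prefix once more: since $r'_s(e)\le r_s(e)+\Delta$ edge-wise by \Cref{def:rsprime}, summing over the $j$-edge prefix gives $\dist_{r'_s}(s,v_j)\le \dist_{r_s}(s,v_j)+j\Delta\le \dist_{r_s}(s,v_j)+h\Delta$. Together with $\dist_{r_s}(s,v_j)\le n-1$ from \Cref{lem:basicFactRS}(ii), this gives $\dist_{r'_s}(s,v_j)\le n+h\Delta$, so \Cref{lem:dprimeIsGood} yields $d_{r'_s}^{v_j}(s,v_j)=\dist_{r'_s}(s,v_j)$. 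The sending condition thus reduces to $d_{r_s}^{v_j}(s,v_j)\ge \dist_{r'_s}(s,v_j)-h\Delta$, which is satisfied by any value $\ge \dist_{r_s}(s,v_j)$. Hence whenever $v_j$'s estimate strictly decreases to its final correct value, it is forwarded; and in the degenerate case where $\lfloor\dist_{r'_s}(s,v_j)\rfloor$ already equals $\dist_{r_s}(s,v_j)$, the unconditional round-$0$ broadcast still delivers the correct value to $v_{j+1}$. Either way the inductive step advances by at most one round, so $d_{r_s}^{v_k}(s,v_k)=\dist_{r_s}(s,t)$ by the end of round $k$, completing the proof.
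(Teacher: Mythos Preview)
Your proof is correct and follows essentially the same route as the paper's: induct along the shortest path $P^*_{w'}(s,t)$ (equivalently, on $|P^*_{w'}(s,t)|$), and for each node on the path verify the sending condition in step~(iii) by combining the edge-wise bound $r'_s\le r_s+\Delta$ with \Cref{lem:basicFactRS}(ii) and \Cref{lem:dprimeIsGood}. If anything, you are slightly more careful than the paper in two places: you spell out the upper-bound invariant explicitly (the paper just says it ``follows easily from induction''), and you address the degenerate case where $v_j$'s initial value already equals $\dist_{r_s}(s,v_j)$ so no decrease ever fires step~(iii) --- the paper's proof tacitly relies on the round-$0$ broadcast here without saying so.
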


Observe that the correctness of output of the algorithm 
follows from this lemma, since in Step 5, every note $t$ can correctly compute
$d_{w'}^t(s,t)=\dist_{w'}(s, t)$ if $|P^*_{w'}(s,t)|\leq h$ and otherwise $d_{w'}^t(s,t) \ge \dist_{w'}(s, t)$.

The intuition behind the proof is to show that $\dist_{r'_s}(s, t)$ (stored as $d_{r'_s}^t(s,t)$) computed in Step~\ref{step:BFS-like} is not very far from  $\dist_{r_s}(s, t)$; i.e $\dist_{r'_s}(s, t)-\dist_{r_s}(s,t) \leq h\Delta$. Intuitively, this is because $|P^*_{w'}(s,t)|\leq h$, and for each edge $(u,v)$,  $0 \le r'_s(u,v)-r_s(u,v)\leq \Delta$. This allows us to modify the Bellman-Ford algorithm in Step~\ref{step:Bellman-Ford Modified} to allow a node to speak only when $d_{r'_s}^t(s)- d_{r_s}^t(s,t) \leq h\Delta$. 

\begin{proof}[Proof of \Cref{lem:short range correctness}]
	The fact that after Step 4, $d_{r_s}^t(s,t) \geq \dist_{r_s}(s, t)$ follows easily from induction on the number of rounds. 
	We prove the rest by induction on $|P^*_{w'}(s,t)|$. For the base case where $|P^*_{w'}(s,t)|=0$, i.e. $s=t$, the claim trivially holds as we set $d_{r_t}^t(t,t)=0$ in the beginning of Step~\ref{step:Bellman-Ford Modified}. Now consider any pair of $s$ and $t$, and assume that the lemma
	holds for any $t'$ such that $|P^*_{w'}(s,t')|<|P^*_{w'}(s,t)| \leq h$. 
	Let $x$ be the neighbor of $t$ in $P^*_{w'}(s,t)$, i.e. $\dist_{w'}(s, t)=\dist_{w'}(s, x)+w'(x,t)$.
	Note that 
	\begin{align}
	\dist_{r_s}(s, t)&=\dist_{r_s}(s, x)+r_s(x,t)\nonumber\\
	&=d_{r_s}^{x}(s,x)+r_s(x,t),
	\label{eq: short range correctness}
	\end{align}
	where the first equality holds because  $P^*_{w'}(s,t)=P^*_{r_s}(s,t)$ is a shortest path in $G(r_s)$ by Lemma~\ref{lem:basicFactRS}(iii). 
	The second inequality then holds  by the induction hypothesis. We will be done if the following claim holds.
	%
	
	\textbf{Claim}: $x$ sends the message ``$d_{r_s}^x(s,x)=\dist_{r_s}(s, x)$''
	to $t$ in round $|P^*_{w'}(s,x)|+1$ or before that (equivalently, $d_{r_s}^x(s,x)$ is decreased to $\dist_{r_s}(s, x)$	by round 
	$|P^*_{w'}(s,x)| \le h-1$ or before that). 
	
	To see why we will be done, observe that the claim implies that $t$ can update $d_{r_s}^t(s,t)$ to $\dist_{r_s}(s, t)$ using \Cref{eq: short range correctness} 
	in round $|P^*_{w'}(s,t)|$ or before that. Note that $t$ knows $r_s(x,t)$ from the initial knowledge.
	To prove the claim, we just need to show that 
	$d_{r'_s}^x(s,x) - \dist_{r_s}(s,x) \leq (h-1) \Delta$. We have
	\begin{align*}
	\dist_{r'_{s}}(s,x) & \le\dist_{r{}_{s}}(s,x)+|P_{r_{s}}^{*}(s,x)|\Delta & \mbox{by the definition of }r'_{s}\\
	& \le\dist_{r_{s}}(s,x)+|P_{w'}^{*}(s,x)|\Delta & \mbox{by \Cref{lem:basicFactRS}(iii)}\\
	& \le\dist_{r{}_{s}}(s,x)+(h-1)\Delta \\
	& \le n+(h-1)\Delta & \mbox{by \Cref{lem:basicFactRS}(ii)}
	\end{align*}
	By \Cref{lem:dprimeIsGood}, we have $d_{r'{}_{s}}^{x}(s,x)=\dist_{r'{}_{s}}(s,x)$. By
	the second last inequality, we conclude that $d_{r'{}_{s}}^{x}(s,x)-\dist_{r_{s}}(s,x)\leq (h-1)\Delta$.
	This proves the claim and the entire lemma.
\end{proof}

By flipping the direction of edges in the graph, we can conclude the result that is used in the main algorithm:
\begin{lemma} After running \Cref{alg:short range} on a graph where the direction of each edge is flipped,
	every node $s$ knows 
	$d_{r_s}^s(s,t) \geq \dist_{r_s}(s, t)$ for all nodes $t$, furthermore, if $|P^*_{w'}(s,t)|\leq h$, then 
	$d_{r_s}^s(s,t) = \dist_{r_s}(s, t)$. Moreover the algorithm takes $\tilde{O}(n\sqrt{h})$ rounds.
	\label{lem:short range conclude}
\end{lemma}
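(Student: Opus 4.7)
The plan is a straightforward symmetry argument. Since the underlying communication network is bidirected, we can equivalently view \Cref{alg:short range} as running on the graph $G^R$ obtained by reversing every directed edge of $G$, with the weight functions correspondingly reversed (i.e.\ $w^R(u,v) := w(v,u)$ and analogously for $w'$, $b$, and the reduced weight $r_s$). A single round of communication along a directed edge of $G^R$ is simulated by one round along the reverse edge in $G$, which exists by bidirectedness and carries one message per round in the opposite sense; so the $\tilde O(n\sqrt h)$ round bound established for \Cref{alg:short range} transfers with no change.

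The key observation is the path-reversal correspondence: a directed path $(s = v_0, v_1, \ldots, v_k = t)$ in $G^R$ corresponds bijectively to the reversed path $(t = v_k, \ldots, v_0 = s)$ in $G$, and the two paths have equal total weight under the respective reversed and original weight functions. Consequently $\dist^{G^R}_{w'}(s,t) = \dist^{G}_{w'}(t,s)$, $|P^{*,G^R}_{w'}(s,t)| = |P^{*,G}_{w'}(t,s)|$, and analogously $\dist^{G^R}_{r_s}(s,t) = \dist^{G}_{r_s}(t,s)$. I would then apply \Cref{lem:short range correctness} as a black box to the execution on $G^R$: every node $t$ ends up knowing $d_{r_s}^t(s,t) \ge \dist^{G^R}_{r_s}(s,t)$, with equality whenever $|P^{*,G^R}_{w'}(s,t)| \le h$. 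Renaming via the reversal correspondence (the node playing the role of the algorithm's sink ``$t$ in $G^R$'' is the node we wish to call the source ``$s$ in $G$,'' and vice versa) rewrites this exactly as the statement of \Cref{lem:short range conclude}.

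The one subtlety I expect to require care is the treatment of the reduced weight $r_s$ under edge reversal, since its definition in \Cref{def:rs_definition} involves $\dist_w$ for the chosen source. Making the correspondence precise amounts to verifying, edge-for-edge, that $r_s$ as computed inside the algorithm's execution on $G^R$ agrees with the reversed instance of the $r_s$ appearing in the lemma's conclusion. This is a mechanical check from the defining formula, so I do not anticipate any serious obstacle; the entire substance of the proof is packaged inside \Cref{lem:short range correctness}.
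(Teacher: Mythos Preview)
Your approach is exactly the paper's: it offers no proof beyond the sentence ``By flipping the direction of edges in the graph, we can conclude the result,'' and you have correctly spelled out what that entails, including the running-time transfer via bidirectedness and the black-box appeal to \Cref{lem:short range correctness}.

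One small correction to your anticipated handling of the subtlety. The reduced weight the algorithm actually computes on $G^R$ is parametrised by the algorithm's \emph{source} $s'$, which under your renaming is the original \emph{sink} $t$; it works out to $\tilde r_t(u,v) = 2\dist_w(u,t) + w'(v,u) - 2\dist_w(v,t)$. This is \emph{not} edge-for-edge the reversal of the $r_s$ in the lemma's conclusion (which would involve $\dist_w(s,\cdot)$, not $\dist_w(\cdot,t)$), so the mechanical edge-by-edge check you describe would fail. What does work is the telescoping observation that for any $st$-path $P$ in $G$, both $r_s(P)$ and the sink-based reduced weight $\rho_t(P) := \sum_{(u,v)\in P}\bigl(2\dist_w(v,t)+w'(u,v)-2\dist_w(u,t)\bigr)$ collapse to the same quantity $w'(P) - 2\dist_w(s,t)$; hence $\dist_{\tilde r_t}^{G^R}(t,s) = \dist_{r_s}^{G}(s,t)$, which is exactly what you need. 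So the obstacle is no more serious than you expected, but the fix is at the level of path totals, not individual edges.
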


	\subsection{Short-Range-Extension Algorithm} \label{sec:short-range-extension}

In this section we show how to implement Step 6 of \Cref{alg:main} with the algorithm called 
short-range-extension algorithm.
We are in the setting such that
in the beginning, every center $c$ already knows $d_{w'}^c(s,c) = \dist_{w'}(s,c)$ for all nodes $s$.
By \Cref{lem:centerDistribution}, this implies with high probability that 
for every pair $s$ and $t$, $(s,t)$ is {\em $h$-nearly realized}. Indeed, let $P^*_{w'}(s,t)=(s=x_0,x_1,x_2,\dots,x_k=t)$ be the shortest path from $s$ to $t$ with respect to $w'$. We have that there is a center $c_s \in \{ x_k,x_{k-1},\dots,x_{k-h}\}$ who knows its distance from $s$ to itself with high probability by \Cref{lem:centerDistribution}.
The goal is that, at the end, every node $t$ knows the distance $\dist_{w'}(s,t)$ for all nodes $s$. 
Moreover, it suffices to show that, at the end, every node $t$ knows $d_{w'}^t(s,t) \geq \dist_{w'}(s,t)$ for all nodes $s$,
and if $|P^*_{w'}(s,t)|C|\leq h$, $d_{w'}^t(s,t)= \dist_{w'}(s,t)$.


The short-range-extension algorithm is a minor modification of the short-range algorithm in \Cref{alg:short range}, with 
the same running time and almost identical implementation. But, in this setting, the centers have additional initial knowledge: every center $t$ already knows $\dist_{w'}(s,t)$ and hence $\dist_{r_s}(s,t)$ for all nodes $s$, 
i.e., $d_{r_s}^t(s,t) = \dist_{r_s}(s,t)$. The following changes exploit this knowledge:
\begin{itemize}
	\item For any node $s$, let $G_{s}$ be the graph obtained from $G$ by
	adding imaginary edges into $G$: for every center $t$, there is
	an additional edge $(s,t)$ with weight $\dist_{r_{s}}(s,t)$. We
	call $G_{s}$ the \emph{$s$-augmented} graph. We define the weight
	function $r''_{s}$ for $G_{s}$ in the same way as how we define
	the weight function $r'_{s}$ for $G$. That is, for each original
	edge $(u,v)$ in $G_{s}$, we set $r''_{s}(u,v)=r'_{s}(u,v)$, and,
	for each imaginary edge $(s,t)$ where $t$ is a center, we set 
	\[
	r''_{s}(s,t)=\begin{cases}
	\Delta & \mbox{if }\dist_{r_{s}}(s,t)=0,\mbox{ and}\\
	\Delta\lceil\dist_{r_{s}}(s,t)/\Delta\rceil & \mbox{otherwise}.
	\end{cases}
	\]
	Let $\dist_{r''_{s}}(u,v)$ denote the distance from $u$ to $v$
	with respect to $r''_{s}$ {\em in the $s$-augmented graph $G_{s}$}. 
	\item In Step 2, every pair of nodes $s$ and $t$, initially set $d_{r''_{s}}^{t}(s,t)=\infty$
	and $d_{r''_{s}}^{t}(t,t)=0$, unless $t$ itself is a center. In
	this case, let 
	\[
	d_{r''_{s}}^{t}(s,t)=\begin{cases}
	\Delta & \mbox{if \ensuremath{\dist_{r_{s}}(s,t)=0}, and}\\
	\Delta\lceil\dist_{r_{s}}(s,t)/\Delta\rceil & \mbox{otherwise.}
	\end{cases}
	\]
	This is possible because each center $t$ already knows $\dist_{r_{s}}(s,t)$
	for all nodes $s$. 
	\item In Step 3, for every node $s$, we compute the same SSSP tree w.r.t.
	$r''_{s}$ instead of $r'_{s}$. Observe that, for every node $s$,
	running the BFS with respect to $r''_{s}$ is the same as simulating
	Step 3 of the original short-range algorithm in the $s$-augmented
	graph $G_{s}$.
	\item In the beginning of Step 4, every node $t$ sets $\lfloor d_{r_{s}}^{t}(s,t)=d_{r''_{s}}^{t}(s,t)\rfloor$
	for all nodes $s$, unless $t$ itself is a center. In this case,
	$d_{r_{s}}^{t}(s,t)=\dist_{r_{s}}(s,t)$. Moreover, we run this step
	for $h+1$ rounds instead of $h$ rounds.
\end{itemize}
The running time clearly does not asymptotically change, and so this
algorithm takes $\tilde{O}(n\sqrt{h})$ rounds. The next two lemmas
establish the correctness of the algorithm and they are close parallels
of Lemmas~\ref{lem:dprimeIsGood} and~\ref{lem:short range correctness}.
\begin{lemma}
	After Step 3 of the modified Algorithm~\ref{alg:short range}, every
	node $t$ knows $d_{r''_{s}}^{t}(s,t)\geq\dist_{r''_{s}}(s,t)$ for
	all nodes $s$, and in particular, $d_{r''_{s}}^{t}(s,t)=\dist_{r''_{s}}(s,t)$
	if $\dist_{r''_{s}}(s,t)\leq n+h\Delta$. \label{lem:dprimeIsGood_Extension} \end{lemma}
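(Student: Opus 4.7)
The plan is to mirror the proof of Lemma~\ref{lem:dprimeIsGood} by observing that Step 3 of the modified algorithm is exactly the BFS of the original Step 3 of Algorithm~\ref{alg:short range} executed in the $s$-augmented graph $G_s$ under the weight function $r''_s$: the initial value of $d_{r''_s}^t(s,t)$ at a center $t$ is precisely $r''_s(s,t)$, which equals the weight of the imaginary edge $(s,t)$ of $G_s$, so the center having this value from round $0$ is at least as informative as receiving it along the imaginary edge during the BFS. Hence the analysis of Lemma~\ref{lem:dprimeIsGood} carries over after this reinterpretation.

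For the first inequality $d_{r''_{s}}^{t}(s,t)\geq\dist_{r''_{s}}(s,t)$, I would argue by induction on the number of rounds: every value ever held by $d_{r''_{s}}^{t}(s,t)$ is either $\infty$, the weight $r''_s(s,t)$ of the imaginary edge (when $t\in C$), or of the form $d_{r''_{s}}^{x}(s,x)+r''_s(x,t)$ where $(x,t)$ is an edge of $G$. In each case this value corresponds to the length of a valid walk from $s$ to $t$ in $G_s$ under $r''_s$, and therefore cannot be smaller than $\dist_{r''_s}(s,t)$.

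For the equality when $\dist_{r''_{s}}(s,t)\leq n+h\Delta$, I would first note that $r''_s(e)$ is a positive multiple of $\Delta$ for every edge $e$ of $G_s$ (original or imaginary), by the definition of $r''_s$ together with the fact that imaginary-edge weights are either $\Delta$ or $\Delta\lceil\dist_{r_s}(s,t)/\Delta\rceil$. Consequently $\dist_{r''_s}(s,t)$ is a multiple of $\Delta$ for every pair $s,t$, and so are all initial values $d_{r''_{s}}^{t}(s,t)$ at centers. The rule that a node transmits its current value in round $d_{r''_{s}}^{t}(s,t)/\Delta$ then yields, by a straightforward induction on $k=\dist_{r''_s}(s,t)/\Delta$, that $d_{r''_{s}}^{t}(s,t)$ has been updated to $\dist_{r''_s}(s,t)$ no later than round $k$: the base case $k=0$ gives $t=s$ (trivial), and the inductive step picks a predecessor $x$ of $t$ on a shortest $st$-path in $G_s$ (possibly $x=s$ along an imaginary edge, in which case $t\in C$ already knows the value at round $0$) and uses that $x$ transmits the correct value by round $(k\Delta-r''_s(x,t))/\Delta$, giving $t$ enough time to relax its own estimate to $\dist_{r''_s}(s,t)$ by round $k$.

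The only point requiring slightly more care than in Lemma~\ref{lem:dprimeIsGood} is handling imaginary edges in both the upper-bound argument and the inductive step; this is exactly why the initial values were chosen as $\Delta\lceil\dist_{r_s}(s,t)/\Delta\rceil$ (and $\Delta$ when $\dist_{r_s}(s,t)=0$), ensuring that they match the imaginary edge weights in $G_s$. I do not foresee further obstacles once this correspondence is set up.
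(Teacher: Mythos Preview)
Your proposal is correct and follows the same approach as the paper. The paper's own proof is the single sentence ``The proof is identical to \Cref{lem:dprimeIsGood} except that $r'_s$ is replaced by $r''_s$''; you have simply spelled out what that replacement entails, including the correspondence between the initial values at centers and the imaginary-edge weights in $G_s$, which is exactly the content hidden behind the paper's one-liner.
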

\begin{proof}
	The proof is identical to \Cref{lem:dprimeIsGood} except
	that $r'{}_{s}$ is replaced by $r''_{s}$.\end{proof}

\begin{lemma}
	After Step~\ref{step:Bellman-Ford Modified} of the modified Algorithm~\ref{alg:short range},
	every node $t$ knows $d_{r_{s}}^{t}(s,t)\geq\dist_{r_{s}}(s,t)$
	for all nodes $s$, furthermore, if $|P_{w'}^{*}(s,t)|C|\leq h$,
	then $d_{r_{s}}^{t}(s,t)=\dist_{r_{s}}(s,t)$; in particular $d_{r_{s}}^{t}(s,t)$
	decreases to $\dist_{r_{s}}(s,t)$ in round $|P_{w'}^{*}(s,t)|C|+1$.
	\label{lem:short range correctness_extension}\end{lemma}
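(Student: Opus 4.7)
The plan is to mirror the proof of \Cref{lem:short range correctness}, replacing the induction on $|P^*_{w'}(s,t)|$ by an induction on $k := |P^*_{w'}(s,t)|C|$. Intuitively, the centers, which already hold the correct $\dist_{r_s}(s,\cdot)$ at initialization, play the role that the source $s$ played in \Cref{lem:short range correctness}: they are the ``seeds'' from which the Bellman--Ford-style relaxations in Step~\ref{step:Bellman-Ford Modified} propagate along the tail $P^*_{w'}(s,t)|C$. This is why the round bound becomes $k+1$ rather than $k$: we pay one extra round for the last-center seed to first transmit its value to its neighbor on the tail.

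For the upper-bound claim $d_{r_s}^t(s,t) \geq \dist_{r_s}(s,t)$, I would argue by induction on the round count. At initialization this holds because centers are set to $\dist_{r_s}(s,t)$ exactly, while for non-centers $d_{r_s}^t(s,t) = \lfloor d_{r''_s}^t(s,t)\rfloor \geq \dist_{r_s}(s,t)$; here I use that any path in the augmented graph $G_s$ from $s$ to $t$ converts, by replacing each imaginary edge $(s,c)$ with an actual shortest path from $s$ to $c$ in $G$, to an $s$-to-$t$ walk in $G$ of no larger $r_s$-weight, so $\dist_{r''_s}(s,t) \geq \dist_{r_s}(s,t)$, and integrality of $\dist_{r_s}(s,t)$ preserves the bound under $\lfloor\cdot\rfloor$. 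The inductive step is routine: each relaxation substitutes $d_{r_s}^x(s,x) + r_s(x,t)$, which upper-bounds $\dist_{r_s}(s,t)$ by the triangle inequality.

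For the equality and the round bound, I would induct on $k$. The base $k=0$ is immediate, since then $t$ is itself a center and $d_{r_s}^t(s,t) = \dist_{r_s}(s,t)$ at initialization. For the inductive step, let $x$ be the predecessor of $t$ on $P^*_{w'}(s,t)$; since the tail after the last center has $k \geq 1$ edges, $x$ lies on that same tail, so $|P^*_{w'}(s,x)|C| \leq k-1$, and by the induction hypothesis $d_{r_s}^x(s,x) = \dist_{r_s}(s,x)$ by round $k$. The main obstacle is to verify that $x$ actually transmits this updated value, which requires $d_{r_s}^x(s,x) \geq d_{r''_s}^x(s,x) - h\Delta$. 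By \Cref{lem:dprimeIsGood_Extension} together with \Cref{lem:basicFactRS}(ii), we have $d_{r''_s}^x(s,x) = \dist_{r''_s}(s,x)$, so the condition becomes $\dist_{r''_s}(s,x) - \dist_{r_s}(s,x) \leq h\Delta$. Letting $c$ be the last center on $P^*_{w'}(s,t)$, the imaginary edge $(s,c)$ in $G_s$ has $r''_s$-weight at most $\dist_{r_s}(s,c) + \Delta$, while the subpath of $P^*_{w'}(s,t)$ from $c$ to $x$ contributes at most $\dist_{r_s}(c,x) + |P^*_{w'}(c,x)|\Delta$ under $r''_s$; summing these and using $\dist_{r_s}(s,x) = \dist_{r_s}(s,c) + \dist_{r_s}(c,x)$ yields $\dist_{r''_s}(s,x) - \dist_{r_s}(s,x) \leq (|P^*_{w'}(c,x)|+1)\Delta = k\Delta \leq h\Delta$. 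Hence $x$ broadcasts, $t$ relaxes using $r_s(x,t)$ one round later, and $d_{r_s}^t(s,t) = \dist_{r_s}(s,x)+r_s(x,t) = \dist_{r_s}(s,t)$ by round $k+1$, closing the induction.
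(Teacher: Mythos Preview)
Your proposal is correct and follows essentially the same route as the paper's proof: induction on $|P^*_{w'}(s,t)|C|$, base case at centers, and the key inequality $\dist_{r''_s}(s,x)-\dist_{r_s}(s,x)\le h\Delta$ obtained by bounding the imaginary edge $(s,c)$ by $\dist_{r_s}(s,c)+\Delta$ and the remaining $c$-to-$x$ portion by $\dist_{r_s}(c,x)+|P^*_{w'}(c,x)|\Delta$. Two minor presentation points: (i) you invoke \Cref{lem:dprimeIsGood_Extension} to get $d^x_{r''_s}(s,x)=\dist_{r''_s}(s,x)$ \emph{before} establishing $\dist_{r''_s}(s,x)\le n+h\Delta$, whereas the paper first derives the chain of inequalities ending in $\le n+h\Delta$ and only then applies the lemma---your content is the same, just reorder; (ii) your upper-bound paragraph (converting augmented-graph paths to $G$-walks and using integrality of $\dist_{r_s}$) is actually more explicit than the paper, which simply defers to the earlier lemma's ``follows easily from induction on the number of rounds.''
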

\begin{proof}
	The proof is almost identical to the proof of Lemma~\ref{lem:short range correctness},
	with the difference that we consider the case that $|P_{w'}^{*}(s,t)|C|\leq h$
	and not $|P_{w'}^{*}(s,t)|\leq h$. Similarly, we prove by induction
	on the length of $|P_{w'}^{*}(s,t)|C|$. For the base case where $|P_{w'}^{*}(s,t)|C|=0$,
	we have that $t$ itself is a center. Hence, the node $t$ already
	knows the distance $\dist_{r_{s}}(s,t)$, i.e., $d_{r_{s}}^{t}(s,t)=\dist_{r_{s}}(s,t)$.
	For the inductive step, we only need to show that $x$, who is the
	previous node of $t$ in $P_{w'}^{*}(s,t)|C$, has decreased $d_{r_{s}}^{x}(s,x)$
	down to $\dist_{r_{s}}(s,x)$ in round $|P_{w'}^{*}(s,x)|C|+1\le h$
	or before that. This follows if we can show $d_{r''_{s}}^{x}(s,x)-\dist_{r_{s}}(s,x)\leq h\Delta$.
	Suppose that $c_{s}$ is the first node in $P_{w'}^{*}(s,t)|C$ which
	is the first node in $P_{w'}^{*}(s,x)|C$ as well. We have that
	\begin{align*}
	\dist_{r''_{s}}(s,x) & \le\dist_{r''_{s}}(s,c_{s})+\dist_{r''_{s}}(c_{s},x)\\
	& \le(\dist_{r_{s}}(s,c_{s})+\Delta)+(\dist_{r_{s}}(c_{s},x)+|P_{r_{s}}^{*}(c_{s},x)|\Delta) & \mbox{by the definition of }r''_{s}\\
	& =\dist_{r_{s}}(s,x)+(|P_{r_{s}}^{*}(c_{s},x)|+1)\Delta\\
	& =\dist_{r_{s}}(s,x)+(|P_{w'}^{*}(c_{s},x)|+1)\Delta & \mbox{by \Cref{lem:basicFactRS}(iii)}\\
	& =\dist_{r_{s}}(s,x)+(|P_{w'}^{*}(s,x)|C|+1)\Delta\\
	& \le\dist_{r_{s}}(s,x)+h\Delta\\
	& \le n+h\Delta & \mbox{by \Cref{lem:basicFactRS}(ii)}
	\end{align*}
	By \Cref{lem:dprimeIsGood_Extension}, we have $d_{r''_{s}}^{x}(s,x)=\dist_{r''_{s}}(s,x)$. By the
	second last inequality, we conclude that $d_{r''_{s}}^{x}(s,x)-\dist_{r_{s}}(s,x)\leq h\Delta$.
	And this completes the induction step and the entire proof. 
\end{proof}

Note that the knowledge about  $\dist_{r_s}(s, t)$ implies the knowledge about $\dist_{w'}(s, t)$. So now we can conclude the lemma that is used in the main algorithm:

\begin{lemma} Suppose that every center $c$ already knows $d_{w'}^c(s,c) = \dist_{w'}(s,c)$ for all nodes $s$.
	After running the modified Algorithm~\ref{alg:short range},
	every node $t$ knows  $d_{w'}^t(s,t) \geq \dist_{w'}(s, t)$ for all nodes $s$, furthermore, if $|P^*_{w'}(s,t)|\leq h$ or $|P^*_{w'}(s,t)|C |\leq h$, then $d_{w'}^t(s,t)=\dist_{w'}(s, t)$. Furthermore, the algorithm runs in 
	$\tilde{O}(n\sqrt{h})$ rounds.
	\label{lem:short range extension conclude}
\end{lemma}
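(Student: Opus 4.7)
The plan is to derive this lemma essentially as a packaging of the two preceding lemmas (\Cref{lem:dprimeIsGood_Extension} and \Cref{lem:short range correctness_extension}) together with the scaling identity $\dist_{w'}(s,t)=2\dist_w(s,t)+\dist_{r_s}(s,t)$ from \Cref{lem:basicFactRS}(iii). The algorithm's final operation (analogous to Step~5 of \Cref{alg:short range}) sets $d_{w'}^t(s,t)=2\dist_w(s,t)+d_{r_s}^t(s,t)$, so every conclusion about $d_{r_s}^t(s,t)$ transports immediately into a matching conclusion about $d_{w'}^t(s,t)$.

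For the correctness part, I would first argue the upper bound for all pairs: by \Cref{lem:short range correctness_extension} we have $d_{r_s}^t(s,t)\geq \dist_{r_s}(s,t)$, so combining with the identity gives $d_{w'}^t(s,t) \geq 2\dist_w(s,t)+\dist_{r_s}(s,t)=\dist_{w'}(s,t)$. For equality, I would handle the two stated hypotheses at once by observing that $P^*_{w'}(s,t)|C$ is, by definition, either a suffix of $P^*_{w'}(s,t)$ or equal to $P^*_{w'}(s,t)$; hence $|P^*_{w'}(s,t)|C|\leq |P^*_{w'}(s,t)|$, and in either case $|P^*_{w'}(s,t)|\leq h$ implies $|P^*_{w'}(s,t)|C|\leq h$. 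Thus both hypotheses reduce to the condition already treated in \Cref{lem:short range correctness_extension}, which yields $d_{r_s}^t(s,t)=\dist_{r_s}(s,t)$; the identity then yields $d_{w'}^t(s,t)=\dist_{w'}(s,t)$.

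For the running time I would revisit the analysis used for the original short-range algorithm and check the two actual changes: (i) Step~3 runs a BFS under $r''_s$ rather than $r'_s$, which amounts to preloading each center $c$ with the value $d_{r''_s}^c(s,c)$ inherited from the input hypothesis, adding no rounds beyond the BFS itself; (ii) Step~4 runs for $h+1$ rather than $h$ rounds. Neither change affects the asymptotic $\dilation=O(n/\Delta+h)$ of the BFS phase nor the per-source $\congestion=O(h\Delta)$ of the modified Bellman--Ford phase, so by \Cref{thm:Ghaffari} the total cost remains $\tilde O(n/\Delta+nh\Delta)=\tilde O(n\sqrt{h})$ with $\Delta=1/\sqrt{h}$.

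I expect the one subtlety to be justifying that the reduction of the hypothesis $|P^*_{w'}(s,t)|\leq h$ to the hypothesis of \Cref{lem:short range correctness_extension} is genuine: one must verify that inserting imaginary edges $(s,c)$ with weight $r''_s(s,c)\geq \dist_{r_s}(s,c)$ does not corrupt any correct distance, which follows because these imaginary edges only upper-bound true distances in $r_s$ and so preserve the invariant $d_{r_s}^t(s,t)\geq \dist_{r_s}(s,t)$ exploited in the upper-bound argument. Once this is noted, the whole proof is a short chain of substitutions into \Cref{lem:short range correctness_extension} and the scaling identity, with the running-time bookkeeping carried over verbatim from \Cref{alg:short range}.
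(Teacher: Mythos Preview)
Your proposal is correct and matches the paper's approach: the paper also treats this lemma as an immediate packaging of \Cref{lem:short range correctness_extension} together with the scaling identity from \Cref{lem:basicFactRS}(iii), preceded only by the one-line remark that knowledge of $\dist_{r_s}(s,t)$ implies knowledge of $\dist_{w'}(s,t)$. Your explicit observation that $|P^*_{w'}(s,t)|C|\le |P^*_{w'}(s,t)|$ to collapse the two hypotheses, and your note about the imaginary edges preserving the lower-bound invariant, are useful details the paper leaves implicit, but the argument is the same.
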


%


	\subsection{Reversed $r$-Sink Shortest Paths Algorithm}\label{sec:reverse}  

In this section, we assume that $r$ special sink nodes $\source_1, \dots, \source_r$ are given and every node $s$ 
knows $d_{w'}^t(s,\source_i)=\dist_{w'}(s,\source_i)$ for all sink nodes $\source_i$. 
(Note that these $r$ special sinks correspond to the centers $C$ in Algorithm~\ref{alg:main}.) We present an $\tilde{O}(n\sqrt{r})$-time algorithm so that each sink 
$\source_i$, $1 \leq i \leq r$, acquires the knowledge $d_{w'}^{\source_i}(s,\source_i)=\dist_{w'}(s, \source_i)$ for all nodes $s$ in the end. 
The algorithm is described in \Cref{alg:reverse}. Here, we write the $t$-sink shortest path tree to mean the shortest path tree (w.r.t. $w'$) that has $t$ as the sink.

\begin{algorithm}
	\caption{Reversed $r$-Sink Shortest Paths Algorithm}\label{alg:reverse}
	\KwIn{$r$ sink nodes $\source_1, \cdots, \source_r$. Every node $s$ knows $\dist_{w'}(s,\source_i)$ for all $1 \leq i \leq r$, i.e., $d_{w'}^{s}(s,\source_i)=\dist_{w'}(s, \source_i)$}
	\KwOut{Each sink node $\source_i$ knows $\dist_{w'}(s, \source_i)$ for all nodes $s$, i.e., $d_{w'}^{\source_i}(s,\source_i)=\dist_{w'}(s, \source_i)$}
	
	Every node $s$ sends $\dist_{w'}(s,\source_i)$, for each $1\leq i \leq r$, to all its neighbors. 
	\label{step:before buildtree}
	
	For each $1 \leq i \leq r$ and every node $s$, $s$ uses the information $\dist_{w'}(x,\source_i)$ from all its neighbors $x$ to decide which neighbor $x^*$ is its parent in the 
	$\source_i$-sink shortest path tree. The node $s$ then informs $x^*$ that it is a child of $x^*$ in the $\source_i$-sink shortest path tree. \label{step:buildtree}
	
	Set $B = \emptyset$. \tcp{$B$ is the set of bottleneck nodes.}	
	
	For each $1 \leq i \leq r$ and every node $s$, $s$ waits until it receives the message $\#(i, x_j)$ from all its children $x_j$ in the $\source_i$-sink shortest path tree. If the node $s \not \in B$, 
	let $\#(i, s) = 1+\sum_{j} \#(i, x_j) $; otherwise $\#(i, s) = 0$. The node $s$ sends $\#(i, s)$ to its parent in the $\source_i$-sink shortest path tree. \label{step:counting}
		
	If any node $s \not \in B \cup \{\source_i\}_{i=1}^{r}$ has $\sum_{i=1}^{r} \#(i,s) > \sqrt{k}n$:\; 
	\label{step:bottleneck}

        {
	\begin{enumerate}[noitemsep,label=(\roman*)] 
		\item $s$ broadcasts its intent of becoming a new bottleneck. 
		\item Node 0 chooses one of the candidates (say the one with the smallest ID) as the new bottleneck $b$ and broadcasts its ID to all nodes. Set $B = B \cup \{b\}$. 
		\item Apply the Bellman-Ford algorithm to build the $b$-sink shortest path tree and the $b$-source shortest path tree, so that every node $s$ knows $d_{w'}^s(s,b)=\dist_{w'}(s,b)$ and $d_{w'}^s(b,s)=\dist_{w'}(b,s)$.
		\item Every node $s$ broadcasts $\dist_{w'}(s,b)$ to all nodes (in particular, to all sinks), so that every sink $\source_i$ knows  $d_{w'}^{\source_i}(s,b)=\dist_{w'}(s,b)$ for all nodes $s$.
		\item Go back to Step~\ref{step:counting}.  
	\end{enumerate}
	}
		
	For each $1 \leq i \leq r$ and each node $s$, $\dist_{w'}(s,\source_i)$ is relayed to sink $\source_i$ through the path $P^*_{w'}(s,\source_i)$ in the $\source_i$-sink shortest path tree if $P^*_{w'}(s,\source_i) \cap B = \emptyset$. That is, every node $x\in V\setminus B$ sends $\dist_{w'}(x,\source_i)$ to its parent in the $\source_i$-sink shortest path tree. When a node $v\in V\setminus B$ receives a message $\dist_{w'}(x,\source_i)$, it sends such message to its parent in the $\source_i$-sink shortest path tree. 
	
	\label{step:propagation}
	
	Each sink $\source_i$, for $1 \leq i \leq r$, computes $\dist_{w'}(s, \source_i)$ for all nodes $s$. \label{step:compute}
\end{algorithm}

Now, we explain the idea of \Cref{alg:reverse}. By Steps 1 and 2, for every sink $\source_i$, 
each node $s$ can decide which neighbor $x^*$ is its parent in the $\source_i$-sink shortest path tree:
if $\dist_w'(s,\source_i) = w'(s,x^*)+\dist_w'(x^*,\source_i)$, then $x^*$ is the parent of $s$. Also, every node $s$ knows which neighbors are its children because the children informed $s$ in Step~\ref{step:buildtree}.


The basic idea is to propagate $\dist_{w'}(\source_i, t)$ for all node $t$ upwards to  $\source_i$ in the $\source_i$-sink shortest path tree (as done in Step~\ref{step:bottleneck}) until $\source_i$ 
receives all the informations. However, a brute-force implementation of this idea leads to $O(nr)$ time complexity, since some nodes may need to send out $O(nr)$ messages. 

We overcome this issue by creating a set $B$ of \emph{bottleneck nodes} (or just \emph{bottlenecks} for short), which is empty initially. 
Intuitively, these nodes are the bottlenecks of the above propagation process. We will let them become a sort of ``ad-hoc'' sinks, namely, if $b \in B$, we will let all nodes $s$ know $d_{w'}^s(s,b) = \dist_{w'}(s,b)$. 
Furthermore, for all $1 \leq i \leq r$, the $\source_i$-shortest path trees will be ``pruned'' from these bottlenecks downwards in the following sense.
In Step~\ref{step:counting}, a node $s$, if not a bottleneck in $B$, aggregates the number of its descendants (including $s$ itself) in the $\source_i$-sink shortest path tree, for each $1 \leq i \leq r$, 
and then informs its parent in the same tree. 
On the other hand, if $t$ is a bottleneck, it informs its parent in the $\source_i$-sink shortest path trees, 
for all $1 \leq i \leq r$,  that it has no descendants, i.e., it is a leaf. 
(this can be regarded as our pruning the $\source_i$-sink shortest path trees from the bottlenecks downwards). 

In Step~\ref{step:bottleneck}, if some nodes $t$, which are neither bottlenecks nor the original sinks, have more than $n\sqrt{r}$ descendants, it declares itself as a potential 
candidate to become a new bottleneck. The special node with ID 0 will then decide on a unique node $b$ to be the new bottleneck (so $B = B \cup \{b\}$) and broadcasts 
this decision. Then we build the $b$-sink shortest path tree and $b$-source shortest path tree using the Bellman-Ford algorithm so that all nodes $s$ knows $d_{w'}^s(s,b)=\dist_{w'}(s,b)$ and $d_{w'}^s(b,s)=\dist_{w'}(b,s)$.
Then, all nodes $s$ forward $d_{w'}^s(s,b)$ to the sinks (by broadcasting to the whole network) so that every sink $\source_i$ knows $d_{w'}^{\source_i}(s,b)=\dist_{w'}(s,b)$. This will be useful information for sinks.
The same process 
(Steps~\ref{step:counting} and~\ref{step:bottleneck}) continues until no more bottleneck is created.  

\begin{lemma} The number of bottlenecks is $|B| = O(\sqrt{r})$ and so Steps~\ref{step:counting} and~\ref{step:bottleneck} repeat $O(\sqrt{r})$ times. 

\label{lem:soManyBottleneck}
\end{lemma}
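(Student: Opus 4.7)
The plan is a ``heavy-path''-style double counting argument. By the threshold in Step~\ref{step:bottleneck}, at the moment $b$ is added to $B$ we have $\sum_i \#(i,b) > n\sqrt{r}$. Unrolling the recursion from Step~\ref{step:counting}, $\#(i,b)$ counts exactly those nodes $v$ that lie in the subtree of $b$ in $T_i$ and whose path to $b$ in $T_i$ avoids every current bottleneck. Thus $b$ can be charged with more than $n\sqrt{r}$ node--sink pairs $(v,\source_i)$. The total number of such pairs is at most $nr$, so if these charge sets are pairwise disjoint across $b \in B$ then $|B|\cdot n\sqrt{r} < nr$, which forces $|B| = O(\sqrt{r})$.

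The heart of the argument is therefore the disjointness claim. Suppose a pair $(v,\source_i)$ is charged to two bottlenecks $b_1, b_2$ with $b_1$ added first. Both must be ancestors of $v$ in $T_i$. If $b_2$ lies above $b_1$ on this ancestral path, then $b_1$ is already a bottleneck strictly between $v$ and $b_2$ at the time $b_2$ is selected, contradicting $(v,\source_i) \in \#(i,b_2)$. So $b_2$ lies strictly between $v$ and $b_1$ in $T_i$. But the value of $\#(i,b_2)$ is determined solely by the children of $b_2$, and is therefore unchanged by the earlier insertion of its ancestor $b_1$; hence $b_2$ already exceeded the threshold before $b_1$ was chosen. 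Under a ``deepest-first'' selection rule (pick any qualifying candidate that has no other qualifying candidate as a proper descendant in any $T_i$), $b_2$ would have been selected before $b_1$, which is a contradiction.

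The main obstacle I anticipate is the selection rule: Step~\ref{step:bottleneck}(ii) breaks ties by smallest ID, which is not obviously deepest-first. To close this gap I would either implement deepest-first explicitly (one extra aggregation round per iteration suffices to detect, for each candidate, whether it has a qualifying descendant, and this cost is absorbed by the $\tilde O(n)$ overhead of the iteration), or replace the pointwise disjointness claim by a potential-function argument directly bounding $\sum_b \sum_i \#(i,b)_{\text{at time $b$ added}} = O(nr)$. Once either version is in hand, $|B| = O(\sqrt{r})$ follows by division, and since each addition of a bottleneck triggers exactly one new iteration of Steps~\ref{step:counting}--\ref{step:bottleneck}, the iteration count is also $O(\sqrt{r})$.
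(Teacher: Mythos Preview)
Your high-level plan matches the paper's: charge each new bottleneck $b$ to the $>n\sqrt r$ pairs counted in $\sum_i \#(i,b)$ and divide into the pool of $nr$ pairs. You are right that this requires the charge sets to be disjoint, and right that the smallest-ID rule does not obviously deliver this. In fact it does not: take a path $a_1,\dots,a_{n-r}$ with every sink $\source_j$ adjacent only to $a_{n-r}$, and assign IDs so that $a_{n-r}$ is chosen first, then $a_{n-r-1}$, and so on. After $a_{n-r}$ is added, $\#(i,a_{n-r-1})=n-r-1$ is unchanged (the new bottleneck is an ancestor, not a descendant), so $a_{n-r-1}$ is still a candidate; iterating yields $\Theta(n-n/\sqrt r)$ bottlenecks rather than $O(\sqrt r)$. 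The paper's one-line ``pruning'' sentence glosses over exactly this point, and your fallback of bounding $\sum_b\sum_i\#(i,b)$ by $O(nr)$ fails on the same example (that sum is $\Theta(rn^2)$ there).

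Your deepest-first rule does not rescue the argument either, because it need not be well defined. The relation ``$b'$ is a proper descendant of $b$ in some $T_i$'' is taken over all $r$ trees simultaneously and can contain cycles: on a four-node path with endpoints $\source_1,\source_2$ and interior nodes $b_2,b_1$ (in that order), $b_1$ lies below $b_2$ in $T_1$ while $b_2$ lies below $b_1$ in $T_2$; hanging enough leaves off $b_1$ and $b_2$ makes both exceed the threshold while neither is ``deepest'', so your rule cannot select. (Your Case~2 also silently uses that $\sum_j\#(j,b_2)$ is monotone nonincreasing in $B$; this is true but is what actually lets you pass from ``$\#(i,b_2)$ is unchanged by inserting the ancestor $b_1$'' to ``$b_2$ already exceeded the threshold''.) The clean repair is not the selection rule but the count: only credit $b$ with pairs $(v,\source_i)$ whose \emph{entire} $v\to\source_i$ path is currently bottleneck-free, i.e.\ use $\#'(i,b)=\#(i,b)$ when $b$ has no bottleneck ancestor in $T_i$ and $0$ otherwise. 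This is exactly the quantity that governs congestion in Step~\ref{step:propagation}, it is computable with one extra downcast per tree, and with $\#'$ disjointness is automatic---once a pair is charged its path acquires a bottleneck and can never be charged again---so the $nr/(n\sqrt r)=\sqrt r$ bound holds under any selection rule.
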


\begin{proof} Observe that originally the total number of nodes in all $\source_i$-sink shortest path trees, for $1\le i \le r$, is $nr$. Each time a new node becomes a 
bottleneck, all its descendants (at least $\Omega(n\sqrt{r})$ of them) are pruned from these trees. Thus, we can create up to at most 
$O(\frac{nr}{n\sqrt{r}}) = O(\sqrt{r})$ bottlenecks and accordingly Steps~\ref{step:counting} and~\ref{step:bottleneck} repeat the same number 
of times.  
\end{proof} 

When there is no more bottleneck to be created, Step~\ref{step:propagation} simply relays the information $\dist_{w'}(s,\source_i)$ to sink $\source_i$ through the $\source_i$-sink shortest path tree, for each $1 \leq i \leq r$, 
as long as 1) $s \in B$ is a bottleneck, or 2) $s$ is not a bottleneck and the path from $s$ to $\source_i$ in the $\source_i$-sink shortest path tree does not contain a bottleneck, i.e. $P^*_{w'}(s,\source_i) \cap B = \emptyset$.
The last step finishes the algorithm. 

\begin{lemma} In Step~\ref{step:compute}, each sink $\source_i$, for $1 \leq i \leq r$, correctly computes $\dist_{w'}(s, \source_i)$ for all nodes $s$, i.e., $d_{w'}^{\source_i}(s,\source_i)=\dist_{w'}(s, \source_i)$.
\label{lem:lastStepIsRight}
\end{lemma}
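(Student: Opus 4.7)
\medskip
\noindent\textbf{Proof plan.} Fix a sink $\source_i$ and an arbitrary node $s$; I will argue that $\source_i$ learns $\dist_{w'}(s,\source_i)$ by the end of Step~\ref{step:compute}. First, I would observe that the tree built in Steps~\ref{step:before buildtree}--\ref{step:buildtree} is a bona fide $\source_i$-sink shortest path tree: after Step~\ref{step:before buildtree} each node $u$ knows $\dist_{w'}(x,\source_i)$ for every neighbor $x$, so the parent $x^\ast$ selected in Step~\ref{step:buildtree} satisfies $\dist_{w'}(u,\source_i)=w'(u,x^\ast)+\dist_{w'}(x^\ast,\source_i)$. Consequently, for any node $u$, the path from $u$ to $\source_i$ in this tree is a shortest $u\source_i$-path in $G(w')$.

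I then split into three cases according to the interaction of $s$ with the final bottleneck set $B$.

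\textit{Case (a): $s\in B$.} When $s$ was admitted into $B$ in Step~\ref{step:bottleneck}(iii), Bellman-Ford from $s$ was executed, so every node---$\source_i$ in particular---learned $\dist_{w'}(s,\source_i)$, which is retained.

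\textit{Case (b): $s\notin B$ and no bottleneck lies on the tree-path from $s$ to $\source_i$.} By a straightforward induction on the length of this path, each node on it successively forwards the message $\dist_{w'}(s,\source_i)$ (initially held by $s$, since by hypothesis $d_{w'}^s(s,\source_i)=\dist_{w'}(s,\source_i)$) upward in Step~\ref{step:propagation}; because no intermediate node is a bottleneck, none is blocked, so $\source_i$ receives $\dist_{w'}(s,\source_i)$.

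\textit{Case (c): $s\notin B$ but some $b\in B$ lies on the tree-path from $s$ to $\source_i$.} Since that path is a shortest $s\source_i$-path in $G(w')$, additivity gives $\dist_{w'}(s,\source_i)=\dist_{w'}(s,b)+\dist_{w'}(b,\source_i)$. When $b$ joined $B$, Step~\ref{step:bottleneck}(iii) (Bellman-Ford on the $b$-source tree) delivered $\dist_{w'}(b,\source_i)$ to $\source_i$, and Step~\ref{step:bottleneck}(iv) (broadcast by every node) delivered $\dist_{w'}(s,b)$ to $\source_i$, so $\source_i$ can evaluate this sum.

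Finally, in Step~\ref{step:compute} the sink should compute, for each $s$, the minimum of the value (if any) received by tree-propagation in Step~\ref{step:propagation} together with $\min_{b\in B}\bigl(\dist_{w'}(s,b)+\dist_{w'}(b,\source_i)\bigr)$. Every such sum is an upper bound on $\dist_{w'}(s,\source_i)$ by the triangle inequality, and any propagated value is already exact, so the minimum is always $\ge\dist_{w'}(s,\source_i)$; by one of Cases (a)--(c) the minimum is attained, yielding equality. The only real obstacle is bookkeeping: verifying that the information generated in Step~\ref{step:bottleneck} (both the all-to-$b$ and $b$-to-all distances) remains available at $\source_i$ at the time of the final aggregation, and that the tree-path invoked in Case (c) is indeed a shortest path so that the additive decomposition is valid; both follow directly from the construction of the trees in Steps~\ref{step:before buildtree}--\ref{step:buildtree} and from the Bellman-Ford/broadcast steps within Step~\ref{step:bottleneck}.
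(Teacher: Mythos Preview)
Your proof is correct and follows essentially the same approach as the paper: a case split on whether the shortest $s\source_i$-path contains a bottleneck, using additivity of shortest paths together with the Bellman-Ford and broadcast steps of Step~\ref{step:bottleneck} in the ``blocked'' case. Your three-way split (separating $s\in B$ from the bottleneck-free case) and the explicit minimum/triangle-inequality argument at the end are slightly more careful than the paper's two-case version, but the substance is identical.
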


\begin{proof} Consider the path $P^*_{w'}(s,\source_i)$ from $s$ to $\source_i$ in the $\source_i$-sink shortest path tree. There are two cases. First, if $s \in B$ or $P^*_{w'}(s,\source_i) \cap B = \emptyset$, then, by Step~\ref{step:propagation}, $\dist_{w'}(s,\source_i)$ is relayed to $\source_i$ and we are done.
	Second, if $s$ is not a bottleneck and there is a bottleneck $b$ in $P^*_{w'}(s,\source_i)$, then, by Step~\ref{step:bottleneck}(iv), $\dist_{w'}(s, b)$ is known to $\source_i$; i.e.  $d_{w'}^{\source_i}(s,b)=\dist_{w'}(s,b)$. 	
	Also, by Step~\ref{step:bottleneck}(iii), $d_{w'}^{\source_i}(b,\source_i)=\dist_{w'}(b, \source_i)$ is known to $\source_i$. 
	Therefore, $\source_i$ can use these pieces of information to correctly compute $\dist_{w'}(s, \source_i) = d_{w'}^{\source_i}(s,b)+d_{w'}^{\source_i}(b,\source_i)$. 
\end{proof} 

The lemma above concludes the correctness of \Cref{alg:reverse}. Now we analyze the running time.

\begin{lemma} \Cref{alg:reverse} takes $\tilde{O}(n\sqrt{r})$ rounds.
	\label{lem:reversed rounds}
\end{lemma}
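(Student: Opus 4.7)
The plan is to bound the running time by walking through the algorithm step-by-step, accounting for the contribution of each piece and then combining. The only non-trivial analysis is showing that the final propagation step in Step~\ref{step:propagation} enjoys low congestion thanks to the bottleneck invariant.

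First I would handle the cheap initial steps. Step~\ref{step:before buildtree} requires each node to send $r$ distance values to each neighbor; scheduling these via \Cref{thm:Ghaffari} (or simply pipelining over each edge) gives $O(r)$ rounds, and Step~\ref{step:buildtree} is a single round of local computation plus one notification per tree, hence $O(r)$ rounds. Step~\ref{step:counting} sends one aggregated count per tree upward along each tree edge; since the trees have depth at most $n$ and the total congestion per edge is $O(r)$, again by \Cref{thm:Ghaffari} this takes $\tilde O(n+r) = \tilde O(n)$ rounds per invocation (depth $n$ as dilation, $r$ messages as congestion).

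Next I would bound the cost of Step~\ref{step:bottleneck}. By \Cref{lem:soManyBottleneck}, the loop back to Step~\ref{step:counting} is executed at most $O(\sqrt{r})$ times. Each iteration runs a constant number of broadcasts of $O(n)$ messages (declaring candidates, announcing the new bottleneck $b$, and broadcasting $\dist_{w'}(s,b)$ for every node $s$), plus two runs of Bellman-Ford from $b$; each of these subroutines costs $\tilde O(n)$ rounds by \Cref{lem:broadcast} and the classical Bellman-Ford bound. Together with the re-execution of Step~\ref{step:counting} at cost $\tilde O(n)$, each iteration is $\tilde O(n)$ and the total across all iterations is $\tilde O(n\sqrt{r})$.

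The main obstacle is Step~\ref{step:propagation}. Here every non-bottleneck node $s$ relays $\dist_{w'}(s,v_i)$ toward sink $v_i$ along the $v_i$-sink shortest path tree for each $i$ with $P^*_{w'}(s,v_i)\cap B=\emptyset$. The key observation is that after the loop terminates, every non-bottleneck, non-sink node has at most $n\sqrt{r}$ descendants summed across all $r$ trees (otherwise it would have triggered a new bottleneck); equivalently, every tree edge $(s,\text{parent}(s))$ carries at most $n\sqrt{r}$ distance messages in total over all $r$ trees, because exactly one message is forwarded per descendant. This gives $\congestion = \tilde O(n\sqrt{r})$, and since the trees have depth at most $n$, we have $\dilation=O(n)$, so by \Cref{thm:Ghaffari} Step~\ref{step:propagation} finishes in $\tilde O(n+n\sqrt{r})=\tilde O(n\sqrt{r})$ rounds. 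Adding up all contributions yields the claimed $\tilde O(n\sqrt{r})$ bound, and Step~\ref{step:compute} is purely local.
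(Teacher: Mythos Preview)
Your proposal is correct and follows essentially the same approach as the paper's proof: you step through the algorithm, use \Cref{thm:Ghaffari} with the same dilation/congestion bounds for Steps~\ref{step:before buildtree}--\ref{step:counting}, invoke \Cref{lem:soManyBottleneck} to cap the number of loop iterations at $O(\sqrt{r})$ with $\tilde O(n)$ cost each, and then argue that the termination condition of the bottleneck loop forces the per-node (hence per-edge) congestion in Step~\ref{step:propagation} to be $O(n\sqrt{r})$. Your justification of the congestion bound in Step~\ref{step:propagation} is in fact more explicit than the paper's, which simply asserts it.
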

\begin{proof}
	We will use extensively \Cref{thm:Ghaffari} by analyzing \dilation and \congestion in each step.
	In Steps~\ref{step:before buildtree} and \ref{step:buildtree}, each node only sends $r$ messages to its neighbors. 
	So $\dilation = 1$ and $\congestion = r$, and so this takes $\tilde{O}(r)$ rounds.
	In Step~\ref{step:counting}, for every sink $\source_i$, every node $s$ sends a message once along the $\source_i$-sink shortest path tree. As there can be a path of $n$ hops in the tree, $\dilation = n$. Parallelizing  the processes for all sinks $\source_i$ yields $\congestion = r$. So this step takes $\tilde{O}(n+r)=\tilde{O}(n)$.
	
	Now, we analyze Step~\ref{step:bottleneck}. In Step~\ref{step:bottleneck}(i), at most $n$ nodes need to broadcast one message. By \Cref{lem:broadcast}, this takes $O(n+\diam) = O(n)$ rounds. In Step~\ref{step:bottleneck}(ii), only one node broadcast a message and this takes $O(\diam)$ rounds. In Step~\ref{step:bottleneck}(iii), running Bellman-Ford algorithm for finding the $b$-sink shortest path tree, for one node $b$, takes $O(n)$. In Step~\ref{step:bottleneck}(iv), every node broadcasts one messages and this takes $O(n+\diam) = O(n)$ rounds by \Cref{lem:broadcast}. 
	
	By \Cref{lem:soManyBottleneck}, Steps~\ref{step:counting} and \ref{step:bottleneck} repeat $O(\sqrt{r})$ times. In total, this takes $\tilde{O}(n\sqrt{r})$ rounds. Next, in Step~\ref{step:propagation}, the messages are relayed in the shortest path trees, and so $\dilation=n$. Moreover, $\congestion = O(n\sqrt{r})$ because all the nodes $s$ which are descendants of bottlenecks in any tree do not send messages. So this step also takes $O(n\sqrt{r})$ rounds. Therefore, the total number of rounds of the algorithm is  $O(n\sqrt{r})$.
\end{proof}

Finally, we conclude with the lemma that is used in the main algorithms:
\begin{lemma} Every node $s$ knows $\dist_{w'}(s,\source_i)$ for all sinks $\source_i$ where $1 \leq i \leq r$, i.e., $d_{w'}^{s}(s,\source_i)=\dist_{w'}(s, \source_i)$. Then, running \Cref{alg:reverse}, each sink node $\source_i$ knows $\dist_{w'}(s, \source_i)$ for all nodes $s$, i.e., $d_{w'}^{\source_i}(s,\source_i)=\dist_{w'}(s, \source_i)$. Furthermore, \Cref{alg:reverse} takes $\tilde{O}(n\sqrt{r})$ rounds.
	\label{lem:reversed conclude}
\end{lemma}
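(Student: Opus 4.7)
The plan is to derive \Cref{lem:reversed conclude} as an immediate synthesis of the three lemmas already proved for \Cref{alg:reverse}: \Cref{lem:soManyBottleneck} for termination of the bottleneck-extraction loop, \Cref{lem:lastStepIsRight} for correctness at the sinks, and \Cref{lem:reversed rounds} for the round complexity. So the ``proof'' is really a matter of observing that the hypothesis of \Cref{lem:reversed conclude} is precisely the input specification of \Cref{alg:reverse}, and that each of the three supporting lemmas speaks to a disjoint aspect of the guarantee.

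First, I would check that the shortest-path-tree construction in Steps~\ref{step:before buildtree}--\ref{step:buildtree} is well defined under the hypothesis: since every node $s$ knows $\dist_{w'}(s,\source_i)$ and receives $\dist_{w'}(x,\source_i)$ from each neighbor $x$, a neighbor $x^*$ satisfying $\dist_{w'}(s,\source_i)=w'(s,x^*)+\dist_{w'}(x^*,\source_i)$ must exist (breaking ties by ID yields a unique parent), so the $\source_i$-sink tree is built consistently for every $i$. Next, I invoke \Cref{lem:soManyBottleneck} to ensure that the loop in Steps~\ref{step:counting}--\ref{step:bottleneck} terminates after at most $O(\sqrt{r})$ repetitions, at which point every non-bottleneck node has at most $n\sqrt{r}$ descendants summed over all $\source_i$-sink trees. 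This bounds the amount of data that must travel through any single non-bottleneck node during the final relay in Step~\ref{step:propagation}.

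For correctness, I would cite \Cref{lem:lastStepIsRight}: for each $(s,\source_i)$ pair, either $P^*_{w'}(s,\source_i)\cap B=\emptyset$ and the value $\dist_{w'}(s,\source_i)$ is forwarded up the $\source_i$-tree in Step~\ref{step:propagation}, or there is some bottleneck $b\in P^*_{w'}(s,\source_i)\cap B$ and Step~\ref{step:bottleneck}(iii)--(iv) have ensured that $\source_i$ already knows both $\dist_{w'}(s,b)$ and $\dist_{w'}(b,\source_i)$, whose sum equals $\dist_{w'}(s,\source_i)$ because $b$ lies on a shortest $s$-to-$\source_i$ path. For the running time, I would directly cite \Cref{lem:reversed rounds}, which charges $\tilde{O}(n\sqrt{r})$ to the $O(\sqrt{r})$ bottleneck phases (each costing $\tilde{O}(n)$ via Bellman--Ford and a global broadcast) and $O(n\sqrt{r})$ to the final relay along the pruned trees.

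The main obstacle, which is actually discharged inside the supporting lemmas rather than here, is the delicate accounting of congestion in Step~\ref{step:propagation}: one must argue that after extracting all bottlenecks, the data passing through any single node fits within the $\tilde{O}(n\sqrt{r})$ round budget, and this is exactly what the $n\sqrt{r}$ threshold in Step~\ref{step:bottleneck} combined with \Cref{thm:Ghaffari} enforces. With those lemmas in hand, the proof of \Cref{lem:reversed conclude} reduces to one or two sentences invoking them.
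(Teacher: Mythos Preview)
Your proposal is correct and mirrors the paper exactly: the paper states \Cref{lem:reversed conclude} as a concluding lemma with no separate proof, treating it as the immediate synthesis of \Cref{lem:soManyBottleneck}, \Cref{lem:lastStepIsRight}, and \Cref{lem:reversed rounds}. Your observation that the hypothesis is precisely the input specification of \Cref{alg:reverse}, together with citing those three lemmas for termination, correctness, and round complexity respectively, is all that is needed.
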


	\section{$k$-Source Shortest Paths}\label{sec:kSSP}

In this section, we show how to extend the algorithms presented in \Cref{sec:main algo} to solve the $k$-source shortest paths ($k$-SSP) problem. Recall that in this problem we want every node $v$ to know its distance from every of $k$ sources. We let $S$ be the set of sources. Initially, every node knows whether it is a source or not. 

We modify the APSP algorithm as follows. First, we pick $\beta$ sets of random centers\footnote{Recall the $\beta$ is the number of bits needed to represent edge weight (see \Cref{sec:scaling}).}, where each set has size $$\zeta=\min(k, \sqrt{n})\polylog(n).$$ 
Denoted these sets by $C_1, C_2, \ldots, C_\beta$. (Observe that this step can be done in $\tilde O(\sqrt{n}+\diam)$ time since there are only $\tilde O(\sqrt{n})$ centers in total.)
Now we run each iteration of the scaling framework as in \Cref{sec:main algo}, except that in each iteration we only compute shortest paths from only some sources (instead of all nodes). In particular, the set of sources at iteration $i$ is  $S_i=C_{i+1} \cup C_{i+2} \cup ... \cup C_\beta \cup S$. Thus, we can assume that every node knows its distance from all nodes in $S_{i-1}=C_{i} \cup C_{i+1} \cup \ldots \cup C_\beta \cup S$. We will use $C_i$ as a set of random centers in iteration $i$, in the same way we use $C$ in \Cref{alg:main}.\danupon{I assume that we start from iteration $1$ and stop at iteration $\beta$. (Thus, at iteration $0$ all weights are $0$.)}  \Cref{alg:k-source main} describes the new algorithm in details. In this algorithm, we also need to modify the short-range, reversed $r$-sink shortest paths, and short-range-extension so that they can run faster when there are only $q$ sources, where $q=|S_i$|. This is done as in \Cref{alg:k-source main}.

\danupon{NEXT REVISION: If we think of the scaling framework as a recursive algorithm, the description below will be much cleaner. Also, even in the previous algorithm we don't have to assume anything in the beginning. So things are much easier to describe.}

%
%
%
%

\begin{algorithm}
	\caption{$k$-SSP Algorithm (for iteration $i$ in the scaling framework)}\label{alg:k-source main}
	\KwIn{A graph $G$, weight functions $w$, $w'$, and $b$, and set of $k$ sources $S$. Every node knows whether it is a source or not. Every node $t$ knows $\dist_{w}(s,t)$, i.e. $d_{w}^{t}(s,t)=\dist_{w}(s,t)$,  for every node $s\in S_{i-1}=C_{i} \cup \ldots \cup C_\beta \cup S$. Let $h=n/\zeta=\max(n/k, \sqrt{n})$.}
	\KwOut{Every node $t$ knows $\dist_{w'}(s,t)$ for every node $s\in S_{i}=C_{i+1} \cup \ldots \cup C_\beta \cup S$, i.e. $d_{w'}^{t}(s,t)=\dist_{w'}(s,t)$.}
	
	Let $C=C_i$. 
	
	Node $t$ sends $\dist_{w}(s,t)$, for all nodes $s\in S_{i-1}$, to its neighbors $x$ in $G$. The neighbor $x$ internally 
	uses this knowledge to compute $r_{s}(x,t)$, for all nodes $s\in S_{i-1}$, as defined in \Cref{def:rs_definition}. 
	\tcp{This steps takes $O(q)$ rounds.}
	
	Apply the \textbf{$q$-source short-range} algorithm with nodes in $S_i$ as sources and $h$ as above 
	so that every node $s\in S_i$ knows
	$d_{w'}^s(s,t) \geq \dist_{w'}(s,t)$ for all nodes $t$,
	and if $|P^*_{w'}(s,t)|\leq h$, $d_{w'}^s(s,t)= \dist_{w'}(s,t)$. 
	\tcp{This step takes $\tilde O(\sqrt{nqh})$ $=\tilde O(n\cdot \sqrt{(k+\zeta)/\zeta})$ $= \tilde{O}(n+n^{3/4}k^{1/2})$ rounds.}	
	
	All centers $c \in C$ broadcast their knowledge of $d_{w'}^c(c,c')$, for all centers $c' \in C$,
	to  all other nodes in the network. Every node $s\in S_i$ internally uses this knowledge 
	to calculate $d_{w'}^s(s,c)= \dist_{w'}(s,c)$ for all centers $c \in C$.   \tcp{This step takes $\tilde O(\zeta^2)=\tilde{O}(n)$ rounds}	
	
	Apply the \textbf{reversed $q$-source $r$-sink shortest paths} algorithm with nodes in $S_i$ as sources and nodes in $C$ as sinks,
	so that every center $c \in C$ knows 
	$d_{w'}^c(s,c)= \dist_{w'}(s,c)$ for all nodes $s\in S_i$. \tcp{This step takes $\tilde O(n+\sqrt{nqr})$ $=\tilde O(n+\sqrt{n(k+\zeta)\zeta})$ $= \tilde{O}(n+n^{3/4}k^{1/2})$  rounds.}	
	
	Apply the \textbf{$q$-source short-range-extension} algorithm 
	so that every node $t$ knows  
	$d_{w'}^t(s,t) \geq \dist_{w'}(s,t)$ for all nodes $s\in S_i$,
	and if $|P^*_{w'}(s,t)|C |\leq h$, $d_{w'}^t(s,t)= \dist_{w'}(s,t)$. 	\tcp{This step takes $\tilde O(\sqrt{nqh})$ $= \tilde{O}(n+n^{3/4}k^{1/2})$ rounds.}	
	
\end{algorithm}

\danupon{Maybe it's better to change short-range to $h$-range}

\paragraph{$q$-Source Short-Range(-Extension) Algorithms.}  We round up edge weights to multiples of $\Delta$ as done previously. However, we only run the BFS algorithm from $q$ sources (the depth is still $n+h\Delta$). We also run the modification of the Bellman-Ford algorithm with $q$-sources. By the same analysis as in \Cref{sec:short-range,sec:short-range-extension}, the running time of the $q$-source short-range and short-range-extension algorithms becomes $\tilde O(n/\Delta+h+ qh\Delta)$ which is 
$$\tilde O(\sqrt{nqh})$$ 
when we set $\Delta=\sqrt{n/qh}$.

\paragraph{Reversed $q$-Source $r$-Sink Shortest Paths.} The algorithm proceeds as in \Cref{sec:reverse} except that: 
\begin{itemize}
\item Bottleneck nodes are defined to be those that have $g=\sqrt{nqr}$ messages sent through them.
\item In Step~\ref{step:propagation} the messages are relayed in the shortest path trees only from each source node (and not each node).
%
\end{itemize}

Since there are $qr$ source-sink pairs, there are $|B|\leq \lceil qr/g\rceil = O(1+\sqrt{qr/n})$ bottleneck nodes. By following the proof of \Cref{lem:reversed rounds}, the running time of this algorithm becomes 
$$\tilde O(r+|B|n+\sqrt{nqr})=\tilde O(n+\sqrt{nqr}).$$

\paragraph{Total Time of \Cref{alg:k-source main}.} The $q$-source short-range and short-range-extension algorithms take $\tilde O(\sqrt{nqh})$ $=\tilde O(n\cdot \sqrt{(k+\zeta)/\zeta})$ $= \tilde{O}(n+n^{3/4}k^{1/2})$ rounds. The reversed $q$-source $r$-sink shortest paths algorithm takes $\tilde O(n+\sqrt{nqr})$ $=\tilde O(n+\sqrt{n(k+\zeta)\zeta})$ $= \tilde{O}(n+n^{3/4}k^{1/2})$  rounds.	Other steps can be easily seen to take $\tilde O(n)$ rounds. Thus, \Cref{alg:k-source main} takes  $\tilde{O}(n+n^{3/4}k^{1/2})$  rounds in total.

\section{Open Problems}
The main question is whether distributed APSP can be solved in $\tilde O(n)$ time. Both super-linear lower bound or near-linear upper bound will be a major result. Another related problem is SSSP, where there is still a gap between the lower bound of \cite{DasSarmaHKKNPPW12} and upper bound of \cite{Elkin-STOC17}. 
In general, it is very interesting to close the gap between approximation and exact distributed algorithms. We found this question particular interesting for exact maximum matching and minimum cut; these problem admit an $\tilde \Omega(\sqrt{n})$ lower bound while no non-trivial upper bound is known (even an $O(n)$ one). 
Note that the existing $\tilde \Omega(\sqrt{n})$ lower bound for minimum cut does not hold for a natural special case of checking whether the network has small, e.g. $O(1)$, {\em edge connectivity}. Given that small edge connectivity may indicate the network's likeliness to fail, it is interesting to determine their time complexity exactly. Currently there is a big jump from $O(D)$ time for checking edge connectivity of at most two \cite{Thurimella97,PritchardT11} to $\tilde O(\sqrt{n})$ for higher values \cite{NanongkaiS14_disc}.

	\section{Acknowledgement} 
	This project has received funding from the European Research Council (ERC) under the European Union's Horizon 2020 research and innovation programme under grant agreement No 715672. Nanongkai and Saranurak were also partially supported by the Swedish Research Council (Reg. No. 2015-04659.)
	Nanongkai and Saranurak would like to thank Rotem Oshman for comments on the preliminary version of the result.

	\ifdefined\AdvCite
	
	\printbibliography[heading=bibintoc] 
	
	\else
	
	\bibliographystyle{plain}
	\bibliography{references}
	
	\fi


\end{document}